\relax


\documentclass{article}
\pdfpagewidth=8.5in
\pdfpageheight=11in
\usepackage{style/ijcai21}

\frenchspacing  

\usepackage{times}
\usepackage{soul}
\usepackage{url}
\usepackage[hidelinks]{hyperref}
\usepackage[utf8]{inputenc}
\usepackage[small]{caption}
\usepackage{graphicx}
\usepackage{amsmath}
\usepackage{amsthm}
\usepackage{booktabs}
\usepackage{algorithm}
\usepackage{algorithmic}
\urlstyle{same}

\usepackage{amssymb}
\usepackage{enumitem}
\usepackage{subfigure}
\usepackage{amsfonts}       
\usepackage{nicefrac}       
\usepackage{subfiles}
\usepackage{verbatim}
\usepackage{tikz}
\usetikzlibrary{arrows}
\usetikzlibrary{decorations.pathreplacing}
\usetikzlibrary{shapes.multipart}
\newtheorem{theorem}{Theorem}

\newtheorem{lemma}{Lemma}

\newtheorem{problem}{Problem}

\theoremstyle{definition}

\newtheorem{claim}{Claim}

\graphicspath{{figures/}}
\usepackage{mathrsfs}
\usepackage{xspace}


\newcommand{\ex}[1]{ \mathbb{E} \left[ #1 \right] }
\newcommand{\prob}[1]{ P \left( #1 \right) }
\newcommand{\definef}{$f:2^U\to\mathbb{R}_{\geq 0}$\xspace}
\newcommand{\delt}[2]{\Delta f( #1 , #2 ) \xspace}
\DeclareMathOperator*{\argmax}{arg\,max}

\newcommand{\greedyratio}{$1-1/e$\xspace}

\newcommand{\pool}{\mathcal{S}}
\newcommand{\poolb}{$\mathcal{S}$\xspace}

\newcommand{\qiantime}{$\mathcal{O}(n\kappa^2)$\xspace}

\newcommand{\budget}{$\kappa$\xspace}
\newcommand{\sg}{\textsc{SG}\xspace}

\newcommand{\Sino}[1]{\pool[ #1 ]}
\newcommand{\cards}{\{0,...,P-1\}}
\newcommand{\bigO}[1]{\mathcal{O}( #1 )}
\newcommand{\instance}{SM$(f,\kappa)$\xspace}

\newcommand{\defineP}{$P\in\{1,...,n+1\}$\xspace}
\newcommand{\defineT}{$T\in \mathbb{Z}_{\geq 0}$\xspace}
\newcommand{\definep}{$p\in(0,1]$\xspace}

\newcommand{\definepool}{$\mathcal{S}\subseteq 2^U$\xspace}

\newcommand{\defineepsilon}{$\epsilon\in(0,1)$\xspace}
\newcommand{\definexi}{$\xi\in(0,1)$\xspace}


\newcommand{\ea}{\textsc{PO}\xspace}
\newcommand{\eal}{\textsc{Pareto Optimization}\xspace}
\newcommand{\ealb}{Pareto Optimization\xspace}

\newcommand{\mutate}{\textsc{Mutate}\xspace}

\newcommand{\earatio}{$(1-\epsilon)(1-1/e)$\xspace}
\newcommand{\eatime}{$\mathcal{O}(nP\ln(1/\epsilon))$\xspace}
\newcommand{\eaTa}{2enP}
\newcommand{\eaTb}{8enP\ln(1/\epsilon)}
\newcommand{\eaT}{\eaTb}
\newcommand{\betaf}{\max_{X\in\pool, |X|\leq\omega}f(X)}
\newcommand{\betasetline}{\argmax\{f(X): X\in\pool, |X|\leq\omega\}}

\newcommand{\maxf}[1]{\max \{f(X): X\in\pool, |X|\leq #1 \}}
\newcommand{\maxfb}[1]{\max_{X\in\pool, |X|\leq #1 }f(X)}
\newcommand{\argmaxf}[1]{\argmax \{f(X): X\in\pool, |X|\leq #1 \}}
\newcommand{\argmaxfb}[1]{\argmax_{X\in\pool, |X|\leq #1 } f(X)}
\newcommand{\eaalg}{Alg.\ \ref{algorithm:ea}}


\newcommand{\kbea}{$\kappa$-\textsc{BPO}\xspace}
\newcommand{\bea}{\textsc{BPO}\xspace}
\newcommand{\beal}{\textsc{Biased Pareto Optimization}\xspace}
\newcommand{\bealb}{Biased Pareto Optimization\xspace}
\newcommand{\selectbea}{\textsc{Select-BPO}\xspace}
\newcommand{\flip}{\textsc{Flip-Coin}\xspace}
\newcommand{\uniform}{\textsc{Uniform-Random}\xspace}
\newcommand{\bearatio}{$(1-\epsilon)(1-1/e-\epsilon)$\xspace}
\newcommand{\beatime}{$\mathcal{O}(n\ln(P)\ln(1/\epsilon))$\xspace}

\newcommand{\beaTa}{2ne\ln(1/\epsilon)\maxpointer/p}
\newcommand{\beaTb}{8\ln(n)\maxpointer/p}

\newcommand{\beaprob}{$p/\maxpointer$}

\newcommand{\beaalg}{Alg.\ \ref{algorithm:bea}}
\newcommand{\kappabea}{$\kappa$-\bea}
\newcommand{\kappabearatio}{\bearatio}
\newcommand{\kappabeatime}{$\mathcal{O}(n\ln(1/\epsilon))$\xspace}


\newcommand{\defineH}{$H=en\ln(1/\epsilon)/\kappa$\xspace}

\newcommand{\Hj}{$e\ln(1/\epsilon)/\xi^j$\xspace}

\newcommand{\maxpointer}{\lceil\ln(P)/\ln\left(1/\xi\right)\rceil}







\newcommand{\qiansmt}{\citeauthor{qian2015subset} \shortcite{qian2015subset}\xspace}
\newcommand{\qiansma}{\citeauthor{qian2015subset}\xspace}

\newcommand{\friedt}{\citeauthor{Friedrich2014} \shortcite{Friedrich2014}\xspace}
\newcommand{\frieda}{\citeauthor{Friedrich2014}\xspace}

\newcommand{\mirza}{\cite{mirzasoleiman2015lazier}\xspace}

\usepackage{tikz}
\usetikzlibrary{arrows}
\usepackage{xcolor}





\pdfinfo{
/TemplateVersion (IJCAI.2021.0)
}

\title{Faster Guarantees of Evolutionary Algorithms for Maximization of Monotone Submodular Functions}

\author{
    Victoria G. Crawford
    \affiliations
    University of Florida
    \emails
    vcrawford01@ufl.edu
}

\begin{document}

\maketitle

\begin{abstract}
  In this paper, the monotone submodular maximization problem (SM)
  is studied. SM is to find a subset of size $\kappa$ from a universe of size $n$
  that maximizes a monotone submodular objective function $f$.
  We show using a novel analysis that the Pareto optimization algorithm achieves
  a worst-case ratio of \earatio in expectation for every cardinality
  constraint $\kappa < P$, where $P\leq n+1$ is an input, in \eatime queries of $f$.
  In addition, a novel evolutionary algorithm called the biased Pareto optimization
  algorithm,
  is proposed that achieves a worst-case ratio of \bearatio
  in expectation for every cardinality constraint $\kappa < P$ in \beatime
  queries of $f$. Further, the biased Pareto optimization algorithm can be modified in
  order to achieve a worst-case ratio of \bearatio in expectation for cardinality
  constraint $\kappa$ in \kappabeatime queries of $f$.
  An empirical evaluation corroborates our theoretical analysis of the
  algorithms, as the algorithms exceed the stochastic greedy solution value at roughly
  when one would expect based upon our analysis.
\end{abstract}

\section{Introduction}
\label{section:introduction}
A function $f:2^U\to\mathbb{R}_{\geq 0}$ defined on subsets of a ground set $U$
of size $n$
is monotone submodular if it possesses the following two properties:
(i) For all $A\subseteq B\subseteq U$, $f(A)\leq f(B)$ (monotonicity);
(ii) For all $A\subseteq B\subseteq U$
and $x\notin B$, $f(A\cup\{x\})-f(A) \geq f(B\cup\{x\})-f(B)$ (submodularity).
Monotone submodular set functions are found in many applications in machine learning and data mining.
Applications of SM include influence in social networks
\cite{kempe2003maximizing},
data summarization \cite{mirzasoleiman2013distributed},
dictionary selection \cite{Das2011},
and monitor placement \cite{Soma2015a}.
As a result, there has been much recent interest in optimization problems involving monotone
submodular functions.
One such optimization problem is the NP-hard Submodular Maximization Problem (SM),
defined as follows.

\begin{problem}[Submodular Maximization Problem (SM)]
  Let \definef be a monotone submodular function
  defined on subsets of the ground set $U$ of size $n$,
  and $f(\emptyset)=0$.
  Given a budget $\kappa\in [0,n]$, SM is to find
  $\text{argmax}_{|X|\leq\kappa}f(X).$
\end{problem}
An instance of SM is referred to as \instance.
It is assumed that the function $f$ is provided as a value oracle,
which when queried with a set $X$
returns the value of $f(X)$.
Time is measured in queries of $f$, as is the convention in submodular optimization \cite{Badanidiyuru2014}.

To approximate SM, the standard greedy algorithm is very effective.
\citeauthor{Nemhauser1978} \shortcite{Nemhauser1978} showed that the standard greedy algorithm
achieves the best ratio of $(1 - 1/e)$ for SM in
$\mathcal O(nk)$ queries to $f$.
In addition, faster versions of the greedy algorithm have been developed for SM
\cite{Badanidiyuru2014,mirzasoleiman2015lazier}.
In particular,
the stochastic greedy algorithm (SG) of \citeauthor{mirzasoleiman2015lazier}
\shortcite{mirzasoleiman2015lazier}
achieves ratio $1 - 1/e - \epsilon$ in expectation in $O( n \ln (1 / \epsilon ))$
queries to $f$.

Alternatively, one may take the Pareto optimization approach to SM:
Instead of maximizing $f$ for a cardinality constraint $\kappa$, SM is re-formulated as a
bi-objective optimization problem where the goal is to both maximize $f$
as well as minimize cardinality.
Instead of a single solution,
we seek a pool of
solutions none of which dominate another\footnote{In this
context, a solution $Y$ dominates $X$ if $f(X)\leq f(Y)$, $|X|\geq |Y|$,
and at least one of the two inequalities is strict.}.
Greedy algorithms can be used to develop such a pool, however
previous works \cite{Friedrich2014,qian2015subset} have employed bi-objective evolutionary
algorithms because they
iteratively improve the entire pool of solutions and can be run indefinitely.
The evolutionary algorithm \eal (\ea) has previously been shown
to find a \greedyratio approximate solution to \instance for all $\kappa < P$,
where $P\leq n+1$ is an input,
in expected $\mathcal{O}(nP^2)$ queries to $f$ \cite{Friedrich2014}.
Further, \ea has been demonstrated to make significant
empirical improvements over the standard greedy algorithms for SM \cite{qian2015subset}.
But as the size of data has grown
exponentially in recent times, a query complexity that is cubic in $n$ (for $P = \Omega(n)$)
makes these evolutionary algorithms a less attractive option.

\subsection{Contributions}
In this work, a novel analysis is provided for the algorithm \ea, and it is proven
that \ea achieves a worst-case ratio of \earatio in expectation for every instance \instance
with $\kappa < P$, where $P\leq n+1$ is an input,
in \eatime queries of $f$.
This removes a factor of $P$ from the query complexity of \friedt.
This novel analysis has potential to improve the
query complexity of other problems in monotone submodular optimization beyond SM
\cite{qian2015constrained,Qian2017,crawford19,bian2020efficient}.
This result is proven in Theorem \ref{theorem:ea}.

Next, a novel algorithm \beal (\bea) is proposed that is a similar in spirit but
faster version of \ea
for SM. It is proven that \bea achieves a worst-case ratio of \bearatio in expectation
for every instance \instance with $\kappa < P$ in \beatime queries of $f$.
This result is proven in Theorem \ref{theorem:bea}.
Further, a version of \bea for a specific cardinality constraint \budget,
$\kappa$-\beal (\kappabea), is proven to achieve a worst-case ratio of
\kappabearatio in expectation for instance \instance in \kappabeatime queries of $f$.
This result is proven in Theorem \ref{theorem:kappabea}.
This new algorithm
\kappabea thus matches the optimal SG algorithm in
terms of both approximation ratio and query complexity, while
maintaining the ability of \ea to continuously improve a pool of solutions.

The above theoretical results all extend to the more general setting of
monotone $\gamma$-weakly submodular\footnote{
A function \definef is $\gamma$-weakly submodular if for all
$X\subseteq Y \subseteq U$, and $u\notin Y$,
$\sum_{u\in Y\setminus X}\Delta f(X,u) \geq \gamma \left(f(X\cup Y)-f(X)\right)$.
If $\gamma = 1$, then $f$ is submodular.} functions \cite{Das2011},
but with different approximation guarantees that depend on $\gamma$.

An empirical evaluation corroborates our theoretical analysis of the
algorithms, as the algorithms exceed the SG solution value at roughly
when one would expect based upon our analysis.


\subsection{Additional Related Work}
\label{section:relatedwork}
Evolutionary algorithms have been studied for many combinatorial optimization problems
\cite{laumanns2002running,neumann2007randomized,friedrich2010approximating}.
In particular, evolutionary algorithms have been analyzed for problems in submodular optimization
including
SM \cite{Friedrich2014,qian2015subset,roostapour2019pareto},
submodular cover \cite{qian2015constrained,crawford19},
SM with more general cost constraints \cite{bian2020efficient},
and noisy versions of SM \cite{Qian2017}.


\friedt studied a slight variant of \ea where the pool \poolb is
initialized to contain a random set,
and $P=n$.
\frieda proved that their variant of \ea finds a \greedyratio approximate
solution to \instance
in expected $\mathcal{O}(n^2\ln(n) + n^2 \kappa)$ queries of $f$.
It is easy to modify their analysis to see that \ea finds a \greedyratio approximate
solution to \instance for all $\kappa < P$ in expected $\mathcal{O}(nP^2)$ queries of $f$.
The argument of \ea used in the proof of Theorem \ref{theorem:ea} of Section \ref{section:ea}
is substantially different compared to the argument of \frieda
because it analyzes the
expected time until an \textit{expected} approximation ratio is analyzed,
resulting in a speedup to $\mathcal{O}(nP)$ queries of $f$.
In addition, the result of Theorem \ref{theorem:ea}
is in deterministic time due to an application of the Chernoff bound.

\qiansmt considered the subset selection problem, which is a special case of
the monotone $\gamma$-weakly submodular maximization
problem. \qiansma fixed $P=2\kappa$, and showed that for the cardinality constraint
$\kappa$ \ea finds a $1-e^{-\gamma}$ approximate solution
in expected \qiantime queries of $f$.
Their results can be generalized beyond subset selection to the monotone
$\gamma$-weakly submodular maximization
problem with cardinality constraint \budget.

The algorithm \bea, presented in Section \ref{section:bea},
uses a novel,
biased selection procedure to identify sets for mutation.
Because of the biased selection procedure, \bea is the first evolutionary algorithm that
has an approximation guarantee in nearly linear queries of $f$ close to that of the greedy algorithm for SM.



%


\section{Algorithms and Theoretical Results}
\label{section:theoretical}
The theoretical contributions of the paper are presented in this section.
In particular, a new theoretical analysis of the algorithm \eal (\ea) is presented for
SM in Section \ref{section:ea},
the novel algorithm \beal (\bea) is presented and analyzed for SM in Section
\ref{section:bea}, and the faster modification of \bea for a specific cardinality
constraint, $\kappa$-\beal (\kappabea), is presented and analyzed for SM in
Section \ref{section:kappabea}.
The full version of the paper includes an appendix where additional theoretical
details from Section \ref{section:theoretical} are filled in.

\paragraph{Definitions and Notation}
The following notation and definitions will be used throughout Section \ref{section:theoretical}.
Let \definef, $X\subseteq U$, and $x\in U$.
(i) Marginal gain: $\delt{X}{x} = f(X\cup\{x\})-f(X)$.
(ii) The membership of $x$ is \textit{flipped in} $X$
    means that if $x\in X$, then $x$ is
    removed from $X$; and if $x\notin X$,
    then $x$ is added to $X$.
(iii) If $\nu$ is a random variable, then $\ex{\nu}$ denotes the expected value of $\nu$. If $A$ is a random event, then $\prob{A}$ denotes the probability of $A$ occurring.
(iv) Let $\pool\subseteq 2^U$. Then if there exists a unique $Y\in\pool$ such that
  $|Y|=i$, define $\Sino{i}=Y$. If no such $Y$ exists, or there are multiple elements of
  \poolb of cardinality $i$, $\Sino{i}$ is undefined.
(v) Let $X,Y \subseteq U$. Then $X\preceq Y$ if $f(Y)\geq f(X)$ and
  $|Y|\leq|X|$. If at least one of the two inequalities is strict, then
  $X\prec Y$ and $Y$ \textit{dominates} $X$. If
  $f(Y)=f(X)$ and $|Y|=|X|$ then it is said that $X$ is \textit{equivalent} to $Y$.



\subsection{\ealb (\ea)}
\label{section:ea}
In this section, it is proven that in time \eatime, \ea produces
a \earatio approximate solution in expectation for
every cardinality constraint $\kappa < P$, where $P\leq n+1$ is an input.
If $P=\bigO{\kappa}$ for a fixed cardinality constraint $\kappa$, then
\ea produces solutions for SM with similar theoretical guarantees
to that of the standard greedy algorithm in the same asymptotic
time, which shows the practicality of evolutionary algorithms such as \ea for
SM.

\subsubsection{Description of \ea}
\label{section:eaoverview}

\begin{algorithm}[t]
   \caption{\ea$(f, P, T)$: \eal (\ea)}
   \label{algorithm:ea}
   \begin{algorithmic}[1]
     \STATE {\bfseries Input:} \definef; \defineP; \defineT.
     \STATE {\bfseries Output:} \definepool.
     \STATE $\pool \gets \{\emptyset\}$
     \FOR {$t\gets 1$ to $T$} \label{line:for}
      \STATE $i \gets \uniform(\{0,...,P-1\})$ \label{line:selectea}
      \IF{$\Sino{i}$ exists}
       \STATE $B\gets\Sino{i}$
       \STATE $B' \gets $ \mutate($B$) \label{line:mutate}
       \IF {$|B'|<P$ and $\nexists Y \in \pool$ such that $B' \preceq Y$}
       \STATE $\pool \gets \pool \cup \{B'\}\setminus \{Y\in\pool: Y\prec B'\}$ \label{line:add}
       \ENDIF
      \ENDIF
     \ENDFOR
     \STATE \textbf{return} $\pool$
\end{algorithmic}
\end{algorithm}

\noindent In this section, \ea (\eaalg) is described.
The set \definepool is referred to as \textit{the pool}, and each iteration of
the \textbf{for} loop is referred to as \textit{an iteration}.
The pool initially contains only the empty set; its maximum size is determined
by input parameter $P$.
During each iteration, $i\in\cards$ is chosen uniformly randomly
(Line \ref{line:selectea} of \eaalg), and if $B=\Sino{i}$ exists then
it is selected from \poolb to be mutated, otherwise \ea continues to the
next iteration. The subroutine
\mutate takes $B$ and randomly mutates it into $B'\subseteq U$ as follows:
for each $u\in U$, flip the membership of $u$ in $B$ with probability $1/n$.
Finally, if no set in the pool dominates or is equivalent to $B'$ and $|B'| < P$,
then $B'$ is added to the pool and all sets that $B'$
dominates are removed.
There is at most one new query of $f$ on each iteration of \ea, and therefore
the input $T$ is equal to the query complexity.
Pseudocode for the subroutine \mutate is provided in the appendix.

\subsubsection{Analysis of \ea for SM}
\label{section:eaanalysis}
In this section,
the approximation result of the algorithm \ea for SM is presented.
Omitted proofs are given in the appendix.
The statement of Theorem \ref{theorem:ea} easily generalizes to $\gamma$-weakly
submodular objectives $f$, where $1-1/e$ is replaced with $1-1/e^{\gamma}$.

\begin{theorem}
  \label{theorem:ea}
  Suppose \ea is run with input $(f, P, T)$, where \definef is monotone submodular,
  \defineP, and $T> \eaT$. Let
  \poolb be the pool of \ea at the end of iteration $T$.
  Then for any $\kappa < P$,
  $$\ex{\max_{X\in\mathcal{S}, |X|\leq \kappa}f(X)}
   \geq (1-\epsilon)(1-1/e)\max_{|X|\leq \kappa}f(X).$$
\end{theorem}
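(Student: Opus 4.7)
The plan is to refine the Friedrich-style analysis by relaxing the notion of a ``good mutation''. Rather than demanding the mutation flip the single best element to add to $\Sino{j}$, I would count the event that it flips \emph{any} element of a fixed optimum. This averaging over optimum elements exploits submodularity more fully and is what removes a factor of $\kappa$ from the running time relative to the Friedrich-style bound.

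Concretely, fix $X^* \in \argmax \{f(X) : |X| \leq \kappa\}$, and for each $j \in \{0,\ldots,\kappa-1\}$ define the good iteration event $G_j$: at Line~\ref{line:selectea} the algorithm draws $i = j$ (probability $1/P$), and the mutation of $\Sino{j}$ flips exactly one element, which lies in $X^* \setminus \Sino{j}$. Given $\Sino{j}$ exists, $\prob{G_j} \geq |X^* \setminus \Sino{j}|/(enP)$ since each single-element flip has probability $(1/n)(1-1/n)^{n-1} \geq 1/(en)$, and on $G_j$ the flipped element is uniform over $X^* \setminus \Sino{j}$. By monotonicity and submodularity, $\sum_{e \in X^* \setminus \Sino{j}} \delt{\Sino{j}}{e} \geq f(X^*) - f(\Sino{j})$, so the conditional expected marginal gain on $G_j$ is at least $(f(X^*) - f(\Sino{j}))/|X^* \setminus \Sino{j}|$. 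Multiplying by $\prob{G_j}$, the two $|X^* \setminus \Sino{j}|$ factors cancel, leaving an unconditional per-iteration contribution of at least $(f(X^*) - f(\Sino{j}))/(enP)$ to the expected value of the size-$(j{+}1)$ candidate --- with no $\kappa$ in the denominator.

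With this per-iteration gain, I would track $g_j(t) := \ex{f(\Sino{j})}$ at iteration $t$ (with $f(\Sino{j}) = 0$ when $\Sino{j}$ is undefined), exploit that pool values are monotone non-decreasing in $t$, and solve the induced recurrence by induction on $j$ from the boundary $g_0(t) = 0$. The solution should give $f(X^*) - g_\kappa(t) \leq (1-1/\kappa)^\kappa f(X^*)$ plus an error term decaying exponentially in $t$ at rate $1/(enP)$, so that after $\Omega(enP)$ iterations the expected gap is at most $(1/e + \epsilon)f(X^*)$, yielding the greedy-like guarantee in expectation.

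Finally, to produce the deterministic bound $T \geq 8enP\ln(1/\epsilon)$, I would apply a Chernoff bound to the count of good iterations (mean at least $T/(enP)$) to show it is at least $4\ln(1/\epsilon)$ with probability $\geq 1-\epsilon$, which supplies the $(1-\epsilon)$ factor in the stated approximation ratio. The main obstacle will be propagating the single rate $1/(enP)$ through all $\kappa$ levels of the recurrence without picking up a spurious $\kappa$ factor in $T$, while carefully handling the $\max$ that arises because the good step at level $j$ only improves $\Sino{j+1}$ when its value exceeds the current pool entry; it is precisely the averaging-over-$X^* \setminus \Sino{j}$ trick, together with pool monotonicity, that makes a uniform decay rate across levels possible.
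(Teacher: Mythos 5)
Your central idea---redefining the ``good mutation'' so that flipping \emph{any} element of a fixed optimum counts, and letting the $1/|X^*\setminus \Sino{j}|$ in the conditional expected gain (from submodularity) cancel the $|X^*\setminus \Sino{j}|$ in the probability of the good event---is exactly the paper's key insight, and it is precisely what removes the factor of $P$ (equivalently $\kappa$) from the Friedrich-style bound. The paper packages the bookkeeping differently: it tracks a single counter $\omega$ and the set $\argmaxf{\omega}$ rather than per-cardinality entries $\Sino{j}$, proves by induction on the increments of $\omega$ that $\ex{\maxfb{\omega}}\geq(1-(1-1/|A^*|)^{\omega})f(A^*)$, and uses a separate replacement lemma (once a set is produced by \mutate, something at least as good of no larger cardinality stays in \poolb forever) to handle the $\max$/monotonicity issues you flag at the end.

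The genuine gap is in your concentration step. As you define it, the good event at level $j$ has probability $|X^*\setminus\Sino{j}|/(enP)$, which depends on the current state of the pool (and is zero once $X^*\subseteq\Sino{j}$); the indicators of good iterations are therefore neither independent nor identically distributed, so Chernoff cannot be applied to their sum as stated. Moreover you need roughly $\kappa$ level-advances, not $4\ln(1/\epsilon)$ of them, and your claimed mean $T/(enP)$ is missing the factor $\kappa$. The paper's fix is to enlarge the success event to ``\mutate flips exactly one element of $A^*$, whether by adding an element of $A^*\setminus B$ or by \emph{removing} an element of $A^*\cap B$'': this makes the per-iteration success probability exactly $\frac{1}{P}\sum_{x\in A^*}\frac{1}{n}\left(1-\frac{1}{n}\right)^{n-1}\geq\frac{\kappa}{enP}$, independent of the pool's state, so the iterations form an i.i.d.\ Bernoulli process and Chernoff gives $\prob{\sum_{i=1}^T Y_i<\kappa}\leq\epsilon$ for $T\geq \eaTb$; the removal case costs nothing in the expectation recurrence because there $X\cup\{a^*\}=X$. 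You would need either to adopt this state-independent success event or to replace Chernoff with a drift/martingale concentration argument; as written, the final step does not go through.
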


\paragraph{Overview of Proof of Theorem \ref{theorem:ea}}
Given \instance with optimal solution $A^*$,
the standard greedy algorithm iteratively picks into its solution
the element in $U$ of highest marginal gain until \budget elements have been picked.
Existing analyses of \ea for SM \cite{Friedrich2014,qian2015subset}
analyze the time it takes until essentially the standard greedy algorithm randomly occurs within \ea.
If instead of iteratively picking the element of highest marginal gain, a uniformly
random element of $A^*$ (possibly already chosen) is picked into the solution
until \budget elements are picked (this could be viewed as an idealized version of
the stochastic greedy algorithm of \citeauthor{mirzasoleiman2015lazier} \shortcite{mirzasoleiman2015lazier}) then
the same approximation guarantee as the standard greedy algorithm (\greedyratio)
is achieved in expectation.
In the proof of Theorem \ref{theorem:ea}, the expected time until this second
algorithm randomly occurs within \ea is analyzed, which is a factor of $P$ faster.

\begin{proof}
Line numbers referenced are those in Algorithm \ref{algorithm:ea}.
Throughout the proof of Theorem \ref{theorem:ea}, the probability space
of all possible runs of \ea with the stated inputs is considered.
Let $\kappa < P$, and let $A^*=\argmax_{|X|\leq \kappa}f(X)$.
We may assume that $|A^*|=\kappa$, since $f$ is monotone.
In the proof of Theorem \ref{theorem:ea}, the random variable $\omega$ will be used,
defined inductively as follows:
\begin{itemize}[noitemsep]
  \item[(i)] Before the first iteration of \ea, $\omega$ is set to $0$.
  \item[(ii)] If the following
   two conditions are met,
    $\omega$ is incremented at the end of an iteration:
    1) $B=\betasetline$ is selected on Line \ref{line:selectea};
    and 2) \mutate on Line \ref{line:mutate} results in
    the membership of a single element $a^*\in A^*$ being flipped
    (\textit{i.e.}, it is either the case that \mutate returns $B'=B\cup\{a^*\}$ for $a^*\in A^*\setminus B$
    or $B'=B\setminus\{a^*\}$ for $a^*\in A^*\cap B$).
\end{itemize}
Intuitively, $\omega$ is used to track a solution within
$\mathcal{S}$ that has a high $f$ value relative to its cardinality.
In particular, the following lemma describes the
key property of $\omega$.
\begin{lemma}
  \label{lemma:expectation}
  At the end of every iteration of \ea
  $$\ex{\betaf} \geq \left(1-\left(1-\frac{1}{|A^*|}\right)^{\omega'}\right)f(A^*)$$
  where $\omega'=\min\{\omega,P-1\}$.
\end{lemma}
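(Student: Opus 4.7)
The plan is to prove Lemma \ref{lemma:expectation} by induction on $\omega$, using two structural observations about \ea. First, for each fixed $k$, the quantity $\max\{f(X): X \in \pool,\, |X|\leq k\}$ is non-decreasing over iterations: whenever a set $Y$ is removed from $\pool$, it is dominated by the newly added $B'$ with $|B'|\leq|Y|$, so within any cardinality range the maximum cannot decrease. Second, conditional on $\omega$ being incremented in a given iteration, the flipped element $a^*$ is uniformly distributed over $A^*$ by the symmetry of \mutate across the coordinates in $A^*$.

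For the inductive step, fix $k$ with $k+1 \leq P-1$ and suppose $\ex{\max_{X\in\pool,\,|X|\leq k}f(X)}\geq (1-(1-1/|A^*|)^k)f(A^*)$ at the iteration just before $\omega$ becomes $k+1$. At that iteration, $B = \argmaxfb{k}$ is selected and a single element $a^* \in A^*$ is flipped in $B$. If $a^* \notin B$, then $B' = B \cup \{a^*\}$ has $|B'|\leq k+1$ and $f(B') = f(B) + \Delta f(B,a^*)$, and the pool-maintenance rule guarantees a $Y \in \pool$ with $|Y|\leq k+1$ and $f(Y)\geq f(B')$. If $a^* \in B$, then $B'=B\setminus\{a^*\}$ satisfies $|B'|<|B|\leq k$ and $f(B')\leq f(B)$, and either $B$ itself or a set already dominating it remains in $\pool$ at cardinality $\leq k+1$. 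In either subcase,
$$\max_{X\in\pool,\,|X|\leq k+1}f(X) \;\geq\; f(B) + \mathbb{1}[a^*\notin B]\,\Delta f(B,a^*).$$
Averaging over the uniform $a^*$ and using monotonicity and submodularity,
$$\frac{1}{|A^*|}\sum_{a^*\in A^*\setminus B}\Delta f(B,a^*) \;\geq\; \frac{1}{|A^*|}\bigl(f(A^*)-f(B)\bigr),$$
so the expected maximum after the increment is at least $(1-1/|A^*|)f(B)+f(A^*)/|A^*|$. Substituting the inductive bound on $\ex{f(B)}$ and simplifying yields $(1-(1-1/|A^*|)^{k+1})f(A^*)$, closing the induction. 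Iterations in which $\omega$ does not change preserve the bound by the first structural fact; iterations in which $\omega$ grows past $P-1$ freeze $\omega'$ at $P-1$ while the left-hand side can only increase.

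The main obstacle is bookkeeping the randomness cleanly, since the $\omega'$ on the right-hand side is itself random. The cleanest route is to introduce the stopping time $\tau_k$ at which $\omega$ first reaches $k$, establish $\ex{\max_{|X|\leq k}f_{\tau_k}(X)} \geq (1-(1-1/|A^*|)^k)f(A^*)$ by induction on $k\in\{0,\ldots,P-1\}$, and then invoke the first structural fact to lift the bound to every later iteration; the lemma then follows by specializing $k = \omega'_t$. Verifying that the per-cardinality maximum is monotone under the $\prec$-replacement rule, and that the conditional distribution of $a^*$ given a single-element flip into or out of $A^*$ is genuinely uniform on $A^*$, are routine but worth stating explicitly so that the submodular averaging step is unambiguous.
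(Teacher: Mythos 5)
Your proposal is correct and follows essentially the same route as the paper: the two structural facts you isolate are precisely the paper's Lemma \ref{lemma:replace} (per-cardinality maxima in $\pool$ never decrease) and Lemma \ref{lemma:gain} (averaging over a uniform flip in $A^*$ yields expected gain at least $(f(A^*)-f(B))/|A^*|$), and your recursion unrolls to the identical bound. The only organizational difference is that you induct on the value of $\omega$ via stopping times, whereas the paper inducts on the iteration index and splits on the increment event $E$, making explicit (Claim \ref{claim:independenceE}) the independence of $f(X_{t-1})$ from that event --- the one point your substitution of the inductive bound for $\ex{f(B)}$ relies on implicitly and that you correctly flag as the remaining bookkeeping.
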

\noindent A further key point is that once a solution appears in \ea, i.e., it is returned by
\mutate, there always exists at least as good of a solution within \poolb.
\begin{lemma}
  \label{lemma:replace}
  Let $Y\subseteq U$ and $|Y|\leq a < P$. If $Y$ is returned by \mutate during
  iteration $i$ of \ea, then at the end of any iteration $j\geq i$ it holds that
  $\maxf{a} \geq f(Y).$
\end{lemma}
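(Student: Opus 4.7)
The plan is to prove the stronger invariant by induction on $j \geq i$: at the end of every iteration $j \geq i$ there exists some $Z \in \pool$ with $|Z| \leq a$ and $f(Z) \geq f(Y)$. This immediately implies $\maxf{a} \geq f(Y)$.

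For the base case $j = i$, I would examine what happens in iteration $i$ after \mutate returns $Y$. The \textbf{if} guard on Line 9 of \eaalg asks whether $|Y| < P$ and whether no $Z \in \pool$ satisfies $Y \preceq Z$. Since $|Y| \leq a < P$ by hypothesis, the first condition is automatic, so exactly one of two things happens. Either the guard succeeds, in which case $Y$ itself is added to $\pool$ on Line 10 and serves as a witness (as $|Y| \leq a$). Or the guard fails because some $Z \in \pool$ already satisfies $Y \preceq Z$; by the definition of $\preceq$ given in item (v) of the notation section, this means $f(Z) \geq f(Y)$ and $|Z| \leq |Y| \leq a$, so $Z$ is a witness at the end of iteration $i$.

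For the inductive step, suppose the invariant holds at the end of iteration $j$ with witness $Z_j \in \pool$. In iteration $j+1$, the only way $Z_j$ can leave $\pool$ is by being removed on Line 10, which requires some newly added $B'$ to satisfy $Z_j \prec B'$. Unpacking this, $B'$ must satisfy $f(B') \geq f(Z_j) \geq f(Y)$ and $|B'| \leq |Z_j| \leq a$, so $B'$ itself serves as the witness at the end of iteration $j+1$. If $Z_j$ is not removed, then it remains a valid witness.

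The argument is essentially bookkeeping about the pool-update rule, so there is no real analytic obstacle; the one subtle point is that the relation $\preceq$ absorbs both strict domination and equivalence, ensuring that even if $Y$ is rejected in iteration $i$ because an \emph{equivalent} set is already present, that equivalent set still has the desired cardinality and value. Once that is observed, the induction goes through cleanly, and the lemma follows by taking the maximum over the pool.
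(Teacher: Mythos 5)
Your proof is correct and follows essentially the same route as the paper: the paper's argument simply asserts the invariant that once $Y$ is returned by \mutate{} some $X\in\pool$ with $Y\preceq X$ persists thereafter, and then concludes exactly as you do. You have merely made the induction behind that invariant explicit (handling both the rejection-due-to-an-equivalent-or-dominating-set case and the removal-only-by-a-dominating-newcomer case), which is a faithful elaboration rather than a different approach.
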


  \noindent Let event $F$ be that at the completion of a run of \ea, $\omega\geq \kappa$.
  Then it follows from Lemmas \ref{lemma:expectation} and \ref{lemma:replace}
  that
  \begin{align}
    \ex{\max_{X\in\mathcal{S}, |X|\leq \kappa}f(X)|F}
    \geq \left(1-\frac{1}{e}\right)f(A^*) \label{eqn:hfd66672}
  \end{align}
  Then the remainder of the
  proof of Theorem \ref{theorem:ea} is to deal with the probability that $\omega$
  reaches $\kappa$.
  To this end, the following
  lemma states that the run of \ea may be interpreted as a Bernoulli process.
  \begin{lemma}
    \label{lemma:bernoulli}
    Consider a run of \ea as a series of Bernoulli trials $Y_1,...,Y_T$,
    where each iteration is a
    trial and a success is defined to be when $\omega$ is incremented.
    Then $Y_1,...,Y_T$ are independent, identically distributed Bernoulli trials where the
    probability of success is
    \begin{align*}
      \frac{1}{P}\sum_{x\in A^*}\left(1-\frac{1}{n}\right)^{n-1}\frac{1}{n} \geq \frac{|A^*|}{enP}.
    \end{align*}
  \end{lemma}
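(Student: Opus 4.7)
The plan is to show that, on each iteration $t$, the conditional probability of success given the entire history through iteration $t-1$ equals exactly $\tfrac{1}{P}\sum_{x\in A^*}\tfrac{1}{n}\bigl(1-\tfrac{1}{n}\bigr)^{n-1}$, with no dependence on that history. Since the draw of $i$ on Line~\ref{line:selectea} and the coin flips inside \mutate on iteration $t$ are made independently of all prior randomness, this will immediately yield the claimed i.i.d.\ Bernoulli structure.

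First I would establish two pool invariants. \textbf{(i)} $\emptyset\in\pool$ at every iteration: the pool is initialized to $\{\emptyset\}$, and $\emptyset$ can never be strictly dominated because the only set of cardinality at most $0$ is $\emptyset$ and $f(\emptyset)=0$. Consequently $\betasetline$ is always well-defined, since the argmax is taken over a nonempty family. \textbf{(ii)} $\pool$ contains at most one set of each cardinality, because any two sets of the same size are comparable under $\preceq$, so the update rule on Line~\ref{line:add} never retains both. Writing $B=\betasetline$ and $c=|B|$, we have $c\leq P-1$ since Line~\ref{line:add} never inserts a set of size $\geq P$, and invariant (ii) gives $\Sino{c}=B$. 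Hence condition (1) in the definition of $\omega$ (that $B$ is selected) is equivalent to drawing $i=c$ on Line~\ref{line:selectea}.

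The success probability now factors into two independent pieces. The probability that the uniform draw $i\in\{0,\ldots,P-1\}$ equals $c$ is $1/P$, irrespective of $c$ or of the current pool. Conditional on $B$ having been selected, the probability that \mutate$(B)$ flips exactly one element $a^*\in A^*$ and no other element of $U$ is, for each fixed $a^*$, $\tfrac{1}{n}\bigl(1-\tfrac{1}{n}\bigr)^{n-1}$ (one flip with probability $1/n$ and $n-1$ independent non-flips). The events for different $a^*\in A^*$ are disjoint, so summing yields $\sum_{x\in A^*}\tfrac{1}{n}\bigl(1-\tfrac{1}{n}\bigr)^{n-1}$, a quantity depending only on $A^*$ and $n$. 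Multiplying the two probabilities produces the stated exact success probability, and the elementary inequality $(1-1/n)^{n-1}\geq 1/e$ yields the lower bound $|A^*|/(enP)$. Because this conditional success probability does not depend on the history and each iteration uses fresh independent randomness, the trials $Y_1,\dots,Y_T$ are i.i.d.\ Bernoulli, as claimed.

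The main obstacle, and arguably the only nontrivial step, is verifying the two pool invariants — specifically that the update on Line~\ref{line:add} really prevents two sets of the same cardinality from coexisting, so that $\Sino{c}$ is well-defined and equal to $\betasetline$ at every iteration. Once those invariants are in hand, the remainder is a routine factoring of independent random choices together with an elementary inequality.
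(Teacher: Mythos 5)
Your proof is correct and follows essentially the same route as the paper's: factor the success probability into the $1/P$ chance of drawing the cardinality of $\betasetline$ times the $\sum_{x\in A^*}\frac{1}{n}(1-\frac{1}{n})^{n-1}$ chance that \mutate flips exactly one element of $A^*$, observe this is history-independent, and apply $(1-1/n)^{n-1}\geq 1/e$. The pool invariants and the explicit conditioning on the history that you supply are details the paper's one-paragraph proof leaves implicit, so your version is if anything slightly more careful.
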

  \noindent Then Lemma \ref{lemma:bernoulli} and the Chernoff bound can be used to prove that the
  probability of $\omega$ not reaching $\kappa$ after $T \geq \eaT$
  iterations of \ea is small. This is stated in the following lemma.

  \begin{lemma}
    \label{lemma:probability}
    $P\left(\sum_{i=1}^TY_i < \kappa\right) \leq \epsilon.$
  \end{lemma}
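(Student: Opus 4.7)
The plan is to apply a multiplicative Chernoff lower-tail bound to the Bernoulli sum $\sum_{i=1}^T Y_i$ delivered by Lemma \ref{lemma:bernoulli}. The whole lemma is a concentration estimate, so essentially all that remains is to pick the right form of Chernoff and verify that the lower bound on $T$ translates cleanly into the desired failure probability.

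First, from Lemma \ref{lemma:bernoulli} together with $|A^*|=\kappa$ (justified by monotonicity of $f$, as already observed in the proof of Theorem \ref{theorem:ea}), each $Y_i$ is an independent Bernoulli trial with success probability at least $p := \kappa/(enP)$. Setting $\mu := \ex{\sum_{i=1}^T Y_i} = Tp$, the hypothesis $T \geq 8enP\ln(1/\epsilon)$ yields $\mu \geq 8\kappa\ln(1/\epsilon)$, and in the regime of $\epsilon$ where the statement is nontrivial one also has $\mu \geq 2\kappa$ (enforced by the companion bound $T \geq 2enP$ that is implicit in the threshold).

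Next, invoke the standard multiplicative Chernoff bound $\prob{X \leq (1-\delta)\mu} \leq \exp(-\delta^2\mu/2)$ with the choice $\delta := 1 - \kappa/\mu$, so that $(1-\delta)\mu = \kappa$ and the target event $\sum_i Y_i < \kappa$ is exactly a lower-tail deviation of relative size $\delta$. The inequality $\mu \geq 2\kappa$ gives $\delta \geq 1/2$, so the Chernoff exponent $\delta^2\mu/2$ is at least $\mu/8$; chaining with $\mu \geq 8\ln(1/\epsilon)$ (using $\kappa \geq 1$; the case $\kappa = 0$ is trivial) then forces that exponent to be at least $\ln(1/\epsilon)$, and exponentiating gives $\prob{\sum_{i=1}^T Y_i < \kappa} \leq \epsilon$.

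The main delicacy is the calibration between the two ingredients of the lower bound on $T$: the constant factor is what pins $\delta$ away from $0$ and linearizes the Chernoff exponent as a function of $\mu$, while the $\ln(1/\epsilon)$ factor is what then lifts that linear exponent up to the required magnitude. No substantial obstacle is expected beyond this arithmetic; the only conceptual choice worth highlighting is using the multiplicative form of Chernoff rather than an additive Hoeffding-type bound, since the lower bound on $p$ carried through from Lemma \ref{lemma:bernoulli} is naturally used as a fraction of $p$ itself rather than as an absolute displacement.
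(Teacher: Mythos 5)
Your proposal is correct and follows essentially the same route as the paper: invoke Lemma \ref{lemma:bernoulli} to get i.i.d.\ Bernoulli trials with success probability at least $\kappa/(enP)$, use the $T\geq 2enP$ part of the threshold to reduce the event $\{\sum Y_i<\kappa\}$ to a relative deviation of at least $1/2$ below the mean, and then use the $T\geq 8enP\ln(1/\epsilon)$ part to make the multiplicative Chernoff exponent at least $\ln(1/\epsilon)$. The only cosmetic difference is that you parametrize the deviation as $\delta=1-\kappa/\mu$ and then bound $\delta\geq 1/2$, whereas the paper fixes $\eta=1/2$ and first relaxes the threshold from $\kappa$ to $T\rho/2$; the arithmetic is identical.
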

  \noindent Finally, Theorem \ref{theorem:ea} follows from
  the law of total expectation, Inequality \ref{eqn:hfd66672} and
  Lemma \ref{lemma:probability}.
\end{proof}

%



%


\subsection{\bealb (\bea)}
\label{section:bea}
\beal (\bea) is a novel evolutionary algorithm with nearly the same approximation results as \ea for SM in faster time.
Specifically, it is proven that in time \beatime,
\bea finds a \bearatio-approximate solution in expectation for
every cardinality constraint $\kappa < P$, where $P\leq n+1$ is an input.
Thus, \bea is faster than \ea by a factor of $\Omega(P/\ln(P))$;
further, it works similarly to \ea
but has a biased selection procedure instead of choosing uniformly
randomly.

\subsubsection{Description of \bea}
\label{section:beaoverview}
\begin{algorithm}[t]
   \caption{\bea$(f, P, T, p, \epsilon, \xi)$: \beal}
   \label{algorithm:bea}
   \begin{algorithmic}[1]
     \STATE {\bfseries Input:} \definef; \defineP; \defineT; \definep;
     \defineepsilon; \definexi.
     \STATE {\bfseries Output:} \definepool.
     \STATE $M\gets\maxpointer$
     \STATE $\beta_i\gets 0$, $\ell_i\gets 0$, $H_i \gets e\ln(1/\epsilon)/\xi^i \: \forall i \in\{1,...,M\}$
     \STATE $\pool \gets \{\emptyset\}$
     \FOR {$t\gets 1$ to $T$} \label{line:loopbea}
      \STATE $i\gets\uniform(\{0,...,P-1\})$ \label{line:random2}
      \IF {$\flip(p)=$ heads}
       \STATE $j\gets\uniform(\{1,...,M\})$ \label{line:random}
       \STATE $i\gets |\argmaxf{\beta_j}|$ \label{line:seti}
       \STATE $\ell_j \gets \ell_j + 1$ \label{line:incell}
       \IF{$\ell_j = H_j$}
        \STATE $\ell_j\gets 0$ \label{line:ell}
        \STATE $\beta_j\gets\beta_j+1$
       \ENDIF
      \ENDIF
      \IF{$\Sino{i}$ exists}
       \STATE $B\gets \Sino{i}$ \label{line:selectbea}
       \STATE $B' \gets $ \mutate($B$) \label{line:mutatebea}
       \IF {$|B'|<P$ and $\nexists Y \in \pool$ such that $B' \preceq Y$}
        \STATE $\pool \gets \pool \cup \{B'\}\setminus \{Y\in\pool: Y\prec B'\}$ \label{line:addbea}
       \ENDIF
      \ENDIF
     \ENDFOR
     \STATE \textbf{return} $\mathcal{S}$
\end{algorithmic}
\end{algorithm}

\noindent In this section, \bea (\beaalg) is presented.
Pseudocode for \bea can be found in Alg. \ref{algorithm:bea}.
In overview,
\bea follows a similar iterative procedure to \ea:
every iteration of the \textbf{for} loop,
a set in the $\pool$ is chosen for mutation;
and only sets that are not dominated by any
others are kept. The difference from \ea is
in the selection of the set for mutation;
a certain subset of sets in $\pool$ are
selected more frequently than others,
as determined by the parameters
\definep, \defineepsilon, and \definexi,
and the variables
$\beta_j$ for $j\in\{1,...,M\}$, where $M=\maxpointer$.
There is at most one new query of $f$ on each iteration of \bea, and therefore
the input $T$ is equal to the query complexity.
Next, the selection process is described in detail.

\paragraph{Selection process} During each iteration,
with probability $p$ \bea chooses $j$ from $\{1,...,M\}$ uniformly randomly
(Line \ref{line:random}) and then sets $i=|\argmaxf{\beta_j}|$
(Line \ref{line:seti}).
Otherwise $i$ is chosen uniformly randomly
from $\cards$ (Line \ref{line:random2}).
If $B=\Sino{i}$ exists then
it is selected from \poolb to be mutated, otherwise \bea continues to the
next iteration.
Initially, $\beta_j=0$ $\forall j \in\{1,...,M\}$.
$\beta_j$ is incremented to $\beta_j+1$ if
on $H_j=$\Hj iterations since the last increment of $\beta_j$
$j$ was chosen on Line \ref{line:random}.
The variable $\ell_j$ is used to determine when $\beta_j$ should be incremented:
$\beta_j$ is incremented during an iteration if and only if $\ell_j$ is set to 0
on Line \ref{line:ell}.
Notice that if $p=0$, \bea is equivalent to \ea.

\subsubsection{Analysis of \bea for SM}
\label{section:beaanalysis}
The approximation results of \bea for SM are now presented.
Lemmas referenced in the proof of Theorem \ref{theorem:bea} can be found in
the appendix.
The statement of Theorem \ref{theorem:bea} easily generalizes to $\gamma$-weakly
submodular objectives $f$, where $1-1/e-\epsilon$ is replaced with $1-1/e^{\gamma}-\epsilon$.

\begin{theorem}
  \label{theorem:bea}
  Suppose \bea is run with input $(f,P,T,p,\epsilon,\xi)$
  where \definef is monotone submodular, \defineP,
  $T\geq\max\{\alpha n\maxpointer, \beta\ln(n)\maxpointer\}$, where $\alpha=2e\ln(1/\epsilon)/p$
  and $\beta=8/p$,
  \definep, \defineepsilon, and \definexi.
  Let \poolb be the pool of \bea at the end of iteration $T$.
  Then for any $\kappa < P$,
  $$\ex{\max_{X\in\mathcal{S}, |X|\leq \kappa}f(X)}
   \geq (1-\epsilon)(1-1/e-\epsilon)\max_{|X|\leq \kappa}f(X).$$
\end{theorem}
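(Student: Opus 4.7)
The plan is to adapt the three-step framework of the proof of Theorem~\ref{theorem:ea}---a submodular progress lemma, a persistence lemma, and a Chernoff bound on a Bernoulli-trial reinterpretation---to the biased selection procedure of \bea. The new ingredient is that cardinalities are sampled non-uniformly: with probability $p$, the selected cardinality is $i = |\argmaxf{\beta_j}|$ for a uniformly random $j \in \{1, \ldots, M\}$, and the pointers $\beta_j$ grow deterministically after $H_j = e\ln(1/\epsilon)/\xi^j$ selections of $j$.

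First, I would introduce, for each $j \in \{1, \ldots, M\}$, a per-pointer progress counter $\omega_j$ analogous to the $\omega$ of the proof of Theorem~\ref{theorem:ea}: $\omega_j$ is incremented when (i)~index $j$ is drawn on Line~\ref{line:random}, (ii)~$B = \Sino{i}$ exists for $i = |\argmaxf{\beta_j}|$ and is selected, and (iii)~\mutate flips the membership of exactly one element of $A^*$. A pointer-wise analog of Lemmas~\ref{lemma:expectation} and~\ref{lemma:replace}, essentially unchanged in argument, then gives
\begin{align*}
\ex{\max_{X \in \pool,\, |X| \leq \beta_j} f(X)} \geq \Bigl(1 - \bigl(1 - \tfrac{1}{\kappa}\bigr)^{\omega_j'}\Bigr) f(A^*),
\end{align*}
where $\omega_j' = \min\{\omega_j, \beta_j\}$.

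Next, for the given $\kappa < P$, I would fix a distinguished index $j^\star$ calibrated to $\kappa$ through the schedule $H_{j^\star}$, and argue via a Chernoff bound on the i.i.d.\ Bernoulli sequence ``was $j^\star$ drawn on Line~\ref{line:random}?''---with per-iteration success probability $p/M$---that the number of $j^\star$-draws in $T$ iterations is concentrated around $Tp/M$. The $8\ln(n)M/p$ term of the hypothesis on $T$ then ensures $\beta_{j^\star} \geq \kappa$ with probability at least $1 - \epsilon/2$.

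The main obstacle will be a second, more delicate, Chernoff-type argument: showing $\omega_{j^\star} \geq (1-\epsilon)\kappa$ with probability $\geq 1 - \epsilon/2$, given that the per-iteration success probability depends on the currently selected argmax set, which shifts as the pool evolves. I would resolve this by partitioning the $j^\star$-selections into $\kappa$ phases of length $H_{j^\star}$ each, so that within phase $k$ the pointer $\beta_{j^\star}$ is fixed at $k-1$; the per-mutation success probability is at least $(\kappa - (k-1))/(en)$, and the calibration of $H_{j^\star}$ (together with the $2en\ln(1/\epsilon)M/p$ term of $T$) makes the per-phase failure probability small enough that a Chernoff bound on the phase successes yields $\omega_{j^\star} \geq (1-\epsilon)\kappa$ with high probability. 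Substituting into the progress inequality and using $1 - (1 - 1/\kappa)^{(1-\epsilon)\kappa} \geq 1 - 1/e - \epsilon$ gives the conditional bound $(1 - 1/e - \epsilon) f(A^*)$, and combining with the accumulated failure probability $\epsilon$ via the law of total expectation yields the claimed $(1 - \epsilon)(1 - 1/e - \epsilon) f(A^*)$ bound.
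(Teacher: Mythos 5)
Your overall framework (a progress lemma, a persistence lemma, and a Bernoulli-trial reinterpretation with a Chernoff bound), your identification of a distinguished index $j^\star$ calibrated to $\kappa$ through the schedule (the paper's $q$ with $\xi^q P < |A^*| \leq \xi^{q-1}P$), and your observation that the increments of $\ell_{j^\star}$ form i.i.d.\ Bernoulli trials with per-iteration success probability $p/M$ all match the paper. However, your progress lemma is false as stated, and it is the crux of the argument. You let $\omega_j$ count every iteration on which index $j$ is drawn and \mutate flips exactly one element of $A^*$, and you claim a bound with exponent $\min\{\omega_j,\beta_j\}$. The problem is that successful flips occurring within the same window (i.e., while $\beta_j$ is fixed at some value $k$) do not compound: each acts on the same base set $\argmaxf{k}$ and produces a set of cardinality at most $k+1$, so several successes in one window buy at most one ``level'' of progress, while a window with zero successes still increments $\beta_j$. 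Concretely, if five successes occur while $\beta_{j^\star}=0$ (yielding five random singletons from $A^*$) and none occur in the second window, then afterwards $\omega_{j^\star}=5$, $\beta_{j^\star}=2$, and $\min\{\omega_{j^\star},\beta_{j^\star}\}=2$, yet the pool need only contain a singleton of expected value $f(A^*)/|A^*|$, violating your claimed bound of roughly $2f(A^*)/|A^*|$. The quantity that actually compounds is the number of completed windows containing at least one success, and your proposed second Chernoff argument (lower-bounding $\omega_{j^\star}$ by $(1-\epsilon)\kappa$) does not control that quantity.

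The paper resolves this differently and more economically: it sets $\omega:=\beta_q$ itself, so the progress counter and the cardinality budget are the same variable and increment simultaneously. Conditioned on an increment of $\beta_q$ (event $E$), the probability that at least one of the $H_q$ mutations in the just-completed window flipped a single element of $A^*$ (event $F$) is at least $1-\epsilon$ by the calibration of $H_q$, and conditioned on $F$ the expected gain is at least a $1/|A^*|$ fraction of the remaining gap. Folding $P(F\mid E)\geq 1-\epsilon$ into the recursion yields $\ex{\maxf{\omega}}\geq \left(1-\left(1-(1-\epsilon)/|A^*|\right)^{\omega'}\right)f(A^*)$; this is where the extra $\epsilon$ in the ratio $1-1/e-\epsilon$ comes from, not from a second concentration bound. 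A single Chernoff bound on the number of increments of $\ell_q$ then guarantees $\beta_q\geq\kappa$ with probability at least $1-\epsilon$, and the law of total expectation finishes the proof. If you keep your two-counter structure, you would have to redefine $\omega_{j^\star}$ as the number of windows containing a success and re-derive the recursion per window, at which point you have essentially reconstructed the paper's argument.
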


\paragraph{Overview of Proof of Theorem \ref{theorem:bea}}
Consider \instance with optimal solution $A^*$.
Recall that in the proof of Theorem \ref{theorem:ea}
in Section \ref{section:ea}, the approximation ratio for \instance was proven by
analyzing the expected time until a variable $\omega$ reaches $\kappa$.
In order for $\omega$ to be incremented during an iteration,
$|\argmaxf{\omega}|$ must be selected on Line \ref{line:selectea}, which occurs
with probability $1/P$.
If we instead consider an alternative version of \ea where the selection is biased
towards choosing $|\argmaxf{\omega}|$ with constant probability $\alpha > 1/P$,
then $\omega$ reaches $\kappa$ faster.
The difficulty is that the value of $\omega$ is unknown, since it depends on
\mutate flipping the membership of an $a^*\in A^*$ and nothing else.
The idea behind \bea is that we can \textit{approximately} track $\omega$, and therefore
bias the selection.
In particular, for each \instance with $\kappa <P$, there exists
a $\beta_i$
that is approximately equal to the corresponding $\omega$ for \budget.

\begin{proof}
  Proofs of lemmas used can be found in the appendix.
  Lines numbers referenced are those in Algorithm \ref{algorithm:bea}.
  Throughout the proof of Theorem \ref{theorem:bea}, the probability space
  of all possible runs of \bea with the stated inputs is considered. An iteration of the
  for loop in \bea is simply referred to as an iteration.

  Consider any $\kappa < P$. Define $A^*=\argmax_{|X|\leq \kappa}f(X)$. Without loss of
  generality we may assume that $|A^*|=\kappa$, since $f$ is monotone.
  There exists $q\in\{1,...,\maxpointer\}$ such that
  \begin{align}
    \xi^qP < |A^*| \leq \xi^{q-1}P. \label{eqn:boundopt}
  \end{align}
  Then define $\omega=\beta_q$.

  The $\omega$ defined here serves a similar purpose to that defined in the
  proof of Theorem \ref{theorem:ea}; To track a solution within
  $\mathcal{S}$ that has a high $f$ value relative to its cardinality, as described in
  the following lemma.
  \begin{lemma}
    \label{lemma:expectationbea}
    At the end of every iteration of \bea
    $$\ex{\maxfb{\omega}} \geq \left(1-\left(1-\frac{1-\epsilon}{|A^*|}\right)^{\omega'}\right)f(A^*)$$
    where $\omega'=\min\{\omega,P-1\}$.
    \end{lemma}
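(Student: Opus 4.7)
The plan is to prove the bound by induction on $\omega'$, reusing the single-flip mechanism that drives Lemma~\ref{lemma:expectation} for \ea but discounted by an extra factor of $1-\epsilon$ that accounts for the $H_q$-round batched nature of an increment of $\beta_q$ in \bea.

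The base case $\omega'=0$ is trivial: $\emptyset\in\pool$ throughout the run, so $\maxfb{0}\geq f(\emptyset)=0$.

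For the inductive step, assume the bound for $\omega'-1$. Observe that $\beta_q$ advances from $\omega'-1$ to $\omega'$ precisely after $H_q=e\ln(1/\epsilon)/\xi^q$ iterations in which $j=q$ is drawn on Line~\ref{line:random}; in each of those iterations the biased branch forces $i=|\argmaxfb{\beta_q}|$, so the current $B=\argmaxfb{\omega'-1}$ is fed to \mutate. I call such an iteration a \emph{success} if \mutate flips the membership of exactly one element of $A^*$ and no other element of $U$, exactly as in Lemma~\ref{lemma:expectation}. Conditional on a success, the flipped $a^*$ is uniform over $A^*$; replaying the submodularity calculation of Lemma~\ref{lemma:expectation}, together with Lemma~\ref{lemma:replace} so that the resulting set persists in \pool at all future iterations, gives
\[
\ex{\maxfb{\omega'}\mid \text{success}}\;\geq\;\Bigl(1-\tfrac{1}{|A^*|}\Bigr)\ex{f(B)}+\tfrac{1}{|A^*|}f(A^*).
\]

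The main obstacle, and the only place this argument genuinely departs from the \ea analysis, is lower-bounding the probability that at least one success occurs across the $H_q$ iterations. Each such iteration is an independent Bernoulli trial whose success probability is at least $|A^*|/n\cdot(1-1/n)^{n-1}\geq|A^*|/(en)$, so the failure probability is at most $\bigl(1-|A^*|/(en)\bigr)^{H_q}\leq\exp(-H_q|A^*|/(en))$; substituting $H_q=e\ln(1/\epsilon)/\xi^q$ and the defining inequality $|A^*|>\xi^q P$ from the choice of $q$ bounds this failure probability by $\epsilon$.

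Combining via the law of total expectation with the trivial bound $\ex{\maxfb{\omega'}\mid\text{no success}}\geq\ex{f(B)}$ (which holds because $\pool$ evolves monotonically: only dominated sets are ever removed, so $\maxfb{\omega'}$ is almost-surely nondecreasing) yields the one-step recurrence
\[
\ex{\maxfb{\omega'}}\;\geq\;\ex{f(B)}\Bigl(1-\tfrac{1-\epsilon}{|A^*|}\Bigr)+\tfrac{1-\epsilon}{|A^*|}f(A^*),
\]
into which the inductive hypothesis on $\ex{f(B)}$ is substituted; a routine algebraic unrolling produces the claimed closed form. Iterations that do not increment $\beta_q$ leave $\omega$ unchanged and can only preserve or increase $\maxfb{\omega'}$, so the inequality continues to hold at the end of every iteration.
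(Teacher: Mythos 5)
Your proposal follows essentially the same route as the paper's proof: treat an increment of $\omega=\beta_q$ as the culmination of $H_q$ selections of $\argmaxfb{\beta_q}$ for mutation, bound the probability that none of those $H_q$ mutations flips exactly one element of $A^*$ by $\epsilon$ using the choice of $H_q$ together with $\xi^q P<|A^*|$, apply the uniform-flip submodularity bound and Lemma~\ref{lemma:replace}, and unroll the resulting one-step recurrence with the $(1-\epsilon)/|A^*|$ factor. The one point the paper makes explicit that you elide is the independence claim $\ex{f(X_{\sigma(t)})\mid F}=\ex{f(X_{\sigma(t)})}$ (Claim~\ref{claim:independenceF}), which is what licenses replacing the conditional expectation of $f(B)$ by its unconditional value before substituting the inductive hypothesis; it is easy to verify but should be stated.
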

  In addition, the property of
  \ea detailed in Lemma \ref{lemma:replace} of Theorem \ref{theorem:ea}
  clearly also holds for \bea.

  Define the event $F$ to be that at the
  completion of a run of \bea $\ell_q$ has been incremented (Line \ref{line:incell}
  of Algorithm \ref{algorithm:bea})
  $H_q\kappa$ times.
  If $\ell_q$ has been incremented $H_q\kappa$ times, then one may see that $\omega$ has been incremented
  $\kappa$ times. Once $\omega$ reaches $\kappa$, it clearly follows
  from Lemmas \ref{lemma:expectationbea} and \ref{lemma:replace} that
  \begin{align}
    \ex{f(A)|F} \geq \left(1-\frac{1}{e}-\epsilon\right)f(A^*) \label{eqn:hfd666}
  \end{align}
  where $A=\text{argmax}_{X\in\mathcal{S}, |X|\leq\kappa}f(X)$.

  We now analyze the probability that $\ell_q$ has been incremented $H_q\kappa$ times.
  To this end, we have the following lemma.
  \begin{lemma}
    \label{lemma:bernoullibea}
    Consider a run of \bea as a series of Bernoulli trials $Y_1,...,Y_T$,
    where each iteration is a
    trial and a success is defined to be when $\ell_q$ is incremented.
    Then $Y_1,...,Y_T$ are independent, identically distributed Bernoulli trials where the
    probability of success is \beaprob.
  \end{lemma}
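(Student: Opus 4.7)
The plan is to read directly off the pseudocode of Algorithm \ref{algorithm:bea} the exact conditions under which $\ell_q$ is incremented, and verify that these conditions depend only on the outcomes of two fresh random experiments performed at the top of the iteration. First, I would observe that $\ell_q$ is touched only on Line \ref{line:incell} (where it may be incremented) and on Line \ref{line:ell} (where it may be reset to $0$); neither of these touches is relevant for the event we are tracking except Line \ref{line:incell}. Inspecting the surrounding control flow, Line \ref{line:incell} is executed in iteration $t$ if and only if (a) $\flip(p)$ returns heads on Line 7 (an event of probability $p$) and (b) the uniform draw on Line \ref{line:random} returns $j = q$ (an event of probability $1/M$, since $j$ is drawn uniformly from $\{1,\ldots,M\}$).

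Next I would argue independence. The two random experiments on Lines 7 and \ref{line:random} are, by construction of \bea, fresh and independent of each other and of the entire history of the algorithm up to iteration $t$ (the pool $\pool$, the counters $\beta_j, \ell_j$, and any earlier coin flips or uniform draws). In particular, conditioning on the full history of iterations $1,\ldots,t-1$, the conditional probability that $\ell_q$ is incremented in iteration $t$ is exactly $p \cdot (1/M) = p/M$. This gives both independence across trials and identical distribution, establishing that $Y_1,\ldots,Y_T$ are i.i.d.\ Bernoulli$(p/M)$ as claimed.

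Since the entire argument is a direct accounting of the randomization inside a single iteration, there is no real obstacle; the only thing to be careful about is confirming that no other branch of the pseudocode can increment $\ell_q$ (in particular, the mutation and pool-update block starting at Line \ref{line:selectbea} does not affect any $\ell_j$), and that the value of $q$ is itself a deterministic function of the fixed problem instance (it is defined from $\kappa$, $P$, and $\xi$ via Inequality \ref{eqn:boundopt}) rather than something that depends on the run. Once these two bookkeeping points are checked, the success probability $p/M$ follows and the lemma is immediate.
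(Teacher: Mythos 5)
Your proposal is correct and takes essentially the same approach as the paper: read off the pseudocode that $\ell_q$ is incremented exactly when \flip$(p)$ returns heads and the uniform draw on Line \ref{line:random} returns $j=q$, giving success probability $p\cdot(1/M)=$ \beaprobsh from fresh per-iteration randomness, hence i.i.d.\ trials. In fact your write-up is more complete than the paper's own one-line proof, which justifies only the probability-$p$ coin flip and omits the $1/M$ factor from selecting $j=q$ (and your observation that $q$ is a deterministic function of the instance via Inequality \ref{eqn:boundopt}, so that ``success'' is well-defined independently of the run, is a bookkeeping point the paper leaves implicit).
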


  Finally, an analogous argument to that of Theorem \ref{theorem:ea} can be used to
  complete the proof of Theorem \ref{theorem:bea}.
  In particular, we bound the probability of event $F$ not occurring after
  $T\geq\max\{\alpha n\maxpointer, \beta\ln(n)\maxpointer\}$, where $\alpha=2e\ln(1/\epsilon)/p$
  and $\beta=8/p$,
  iterations of \bea by $\epsilon$ using the Chernoff bound, and then apply the law of total
  expectation.
  The details of the argument can be found in the appendix.
\end{proof}

\subsection{$\kappa$-\bealb ($\kappa$-\bea)}
\label{section:kappabea}
If a specific cardinality constraint
\budget is provided, a modified version of \bea, $\kappa$-\beal ($\kappa$-\bea),
can produce an approximate solution in expectation even faster than \bea.
In this section, the algorithm \kappabea is described, and it is proven that
\kappabea finds a \kappabearatio-approximate solution to \instance
in \kappabeatime queries of $f$.

\subsubsection{Description of \kappabea}
Pseudocode for \kappabea can be found in the appendix.
\kappabea is similar to \bea except \kappabea is only biased towards
picking a single element of \poolb, determined by the variable $\beta$.
The input parameters of \kappabea are the same as \bea except
\definexi is not needed.

During each iteration,
with probability $p$ \kappabea sets $i=|\argmaxf{\beta}|$.
Otherwise $i$ is chosen uniformly randomly
from $\cards$.
If $B=\Sino{i}$ exists then
it is selected from \poolb to be mutated, otherwise \kappabea continues to the
next iteration.
Initially, $\beta=0$.
$\beta$ is incremented to $\beta+1$ if
on \defineH iterations since the last increment of $\beta$,
$i$ was chosen to be $|\argmaxf{\beta}|$.
The variable $\ell$ is used to determine when $\beta$ should be incremented:
$\beta$ is incremented during an iteration if and only if $\ell$ is set to 0.

\subsubsection{Analysis of \kappabea for SM}
The approximation results of \kappabea for SM are now presented.
The statement of Theorem \ref{theorem:kappabea} easily generalizes to $\gamma$-weakly
submodular objectives $f$ in the analogous manner as \bea.

\begin{theorem}
  \label{theorem:kappabea}
  Suppose \kappabea is run with input $(f,\kappa,P,T,p,\epsilon)$
  where \definef is monotone submodular, $\kappa\in\{1,...,n\}$
  $P\in\{\kappa+1,...,n+1\}$,
  $T\geq\max\{2en\ln(1/\epsilon)/p, 8\ln(n)/p\}$,
  \definep, and \defineepsilon.
  Let \poolb be the pool of \kappabea at the end of iteration $T$.
  Then,
  $$\ex{\max_{X\in\mathcal{S}, |X|\leq \kappa}f(X)}
  \geq (1-\epsilon)(1-1/e-\epsilon)\max_{|X|\leq \kappa}f(X).$$
\end{theorem}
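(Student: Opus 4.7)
The plan is to follow the proof of Theorem \ref{theorem:bea} almost verbatim, specializing to the case where only a single cardinality $\kappa$ is of interest. Because \kappabea maintains only one biasing pointer $\beta$ (rather than the $M$ pointers $\beta_j$ of \bea), the role previously played by the target index $q\in\{1,\ldots,M\}$ collapses to a single bookkeeping variable, and the factor $\maxpointer$ appearing in the complexity of \bea disappears, which accounts for the improved query bound.

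First I would fix $A^* = \argmax_{|X|\leq\kappa} f(X)$, assume $|A^*| = \kappa$ by monotonicity, and set $\omega = \beta$. The key lemma (analog of Lemma \ref{lemma:expectationbea}) should read: at the end of every iteration of \kappabea,
\begin{align*}
\ex{\maxfb{\omega}} \geq \left(1 - \left(1 - \frac{1-\epsilon}{\kappa}\right)^{\omega'}\right) f(A^*),
\end{align*}
where $\omega' = \min\{\omega, P-1\}$. The inductive step uses that each increment of $\beta$ requires $H = en\ln(1/\epsilon)/\kappa$ biased selections of $i = |\argmaxfb{\beta}|$; across those $H$ opportunities, the probability that none of the corresponding mutations flips exactly one $a^* \in A^*$ in $B = \Sino{i}$ is at most $(1-\kappa/(en))^H \leq \epsilon$. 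Conditional on such a good mutation and on the current best set at cardinality $\beta$, the standard submodular marginal argument supplies the recurrence, and Lemma \ref{lemma:replace} (which applies to \kappabea unchanged, as the pool acceptance rule is identical) propagates the improvement to all later iterations.

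Next, define event $F$ as the counter $\ell$ being incremented $H\kappa = en\ln(1/\epsilon)$ times, which forces $\beta$ to reach $\kappa$. Under $F$, plugging $\omega' = \kappa$ into the key lemma and using the standard estimate $\bigl(1-(1-(1-\epsilon)/\kappa)^\kappa\bigr) \geq 1 - e^{-(1-\epsilon)} \geq 1 - 1/e - \epsilon$ gives $\ex{\max_{X\in\pool,|X|\leq\kappa} f(X) \mid F} \geq (1-1/e-\epsilon)f(A^*)$. For $P(F)$, I would view the run as i.i.d.\ Bernoulli trials with success defined as ``$\ell$ is incremented''; because there is only one pointer, the success probability equals $p$ (no $1/M$ factor as in Lemma \ref{lemma:bernoullibea}). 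With $T \geq \max\{2en\ln(1/\epsilon)/p,\, 8\ln(n)/p\}$, a Chernoff lower-tail bound yields $P(F^c)\leq\epsilon$, and the law of total expectation applied to $F$ and $F^c$ completes the theorem.

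The main obstacle is the induction underlying the analog of Lemma \ref{lemma:expectationbea}: two sources of randomness have to be carefully disentangled, namely the probability (at least $1-\epsilon$) that a block of $H$ biased iterations actually contains a mutation flipping exactly one element of $A^*$, and the submodularity-based lower bound on the expected marginal gain conditional on such a mutation occurring and on the state of $\pool$ at cardinality $\beta$. Once this lemma is in place, the remainder is essentially a notational simplification of the \bea argument, and removing the outer uniform draw over $\{1,\ldots,M\}$ eliminates exactly the $\maxpointer$ factor separating the complexity of \kappabea from that of \bea.
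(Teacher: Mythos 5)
Your proposal follows exactly the paper's route: the paper proves Theorem \ref{theorem:kappabea} by declaring it an easy modification of the proof of Theorem \ref{theorem:bea}, with the single substantive change being that Lemma \ref{lemma:bernoullibea} is replaced by a Bernoulli-trial lemma with success probability $p$ (no $1/M$ factor), which is precisely the observation at the center of your argument. Your write-up is in fact more explicit than the paper's (it spells out the analog of Lemma \ref{lemma:expectationbea} with $H=en\ln(1/\epsilon)/\kappa$ and the resulting $(1-\kappa/(en))^H\leq\epsilon$ bound), but it is the same decomposition and the same key lemmas throughout.
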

\begin{proof}
  The proof of Theorem \ref{theorem:kappabea} is any easy modification of the proof of
  Theorem \ref{theorem:bea} and therefore details are left to the reader.
  The key point is that Lemma \ref{lemma:bernoullibea} should be replaced with
  the following lemma.
  \begin{lemma}
    \label{lemma:bernoullikappabea}
    Consider a run of \kappabea as a series of Bernoulli trials $Y_1,...,Y_T$,
    where each iteration is a
    trial and a success is defined to be when $\ell$ is incremented.
    Then $Y_1,...,Y_T$ are independent, identically distributed Bernoulli trials where the
    probability of success is $p$.
  \end{lemma}
\end{proof}

%

%


%

\section{Experimental Evaluation}
\label{section:experiments}
\begin{figure*}[t!]
  \centering
  \subfigure[Gaussian, $f_{FL}$, \budget$=0.1n$] {
    \label{fig:fb-sub-early}
    \includegraphics[width=0.31\textwidth,height=0.17\textheight]{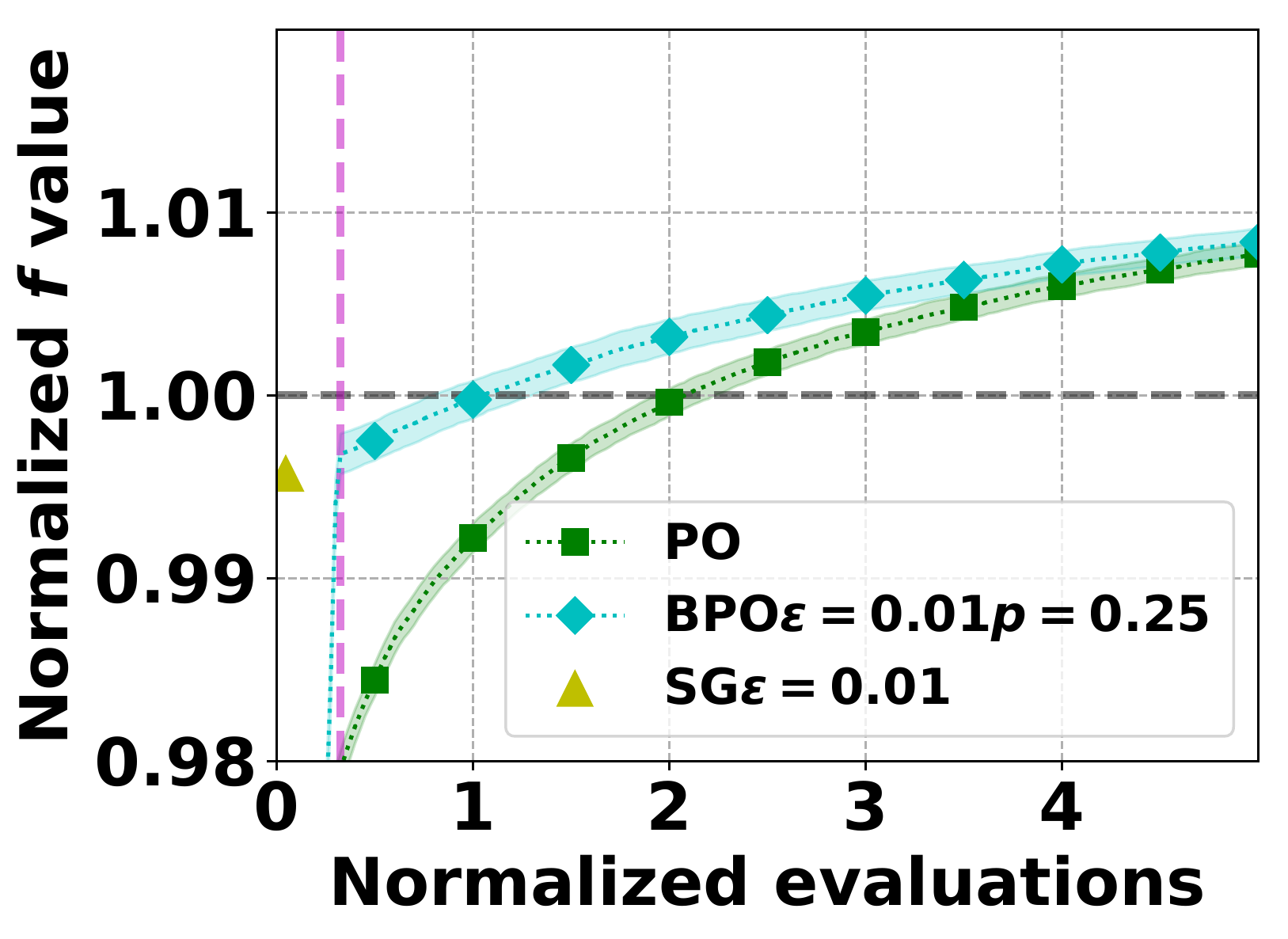}
  }
  \subfigure[CIFAR, $f_{FL}$, \budget$=0.1n$] {
    \label{fig:fb-sub-all}
    \includegraphics[width=0.31\textwidth,height=0.17\textheight]{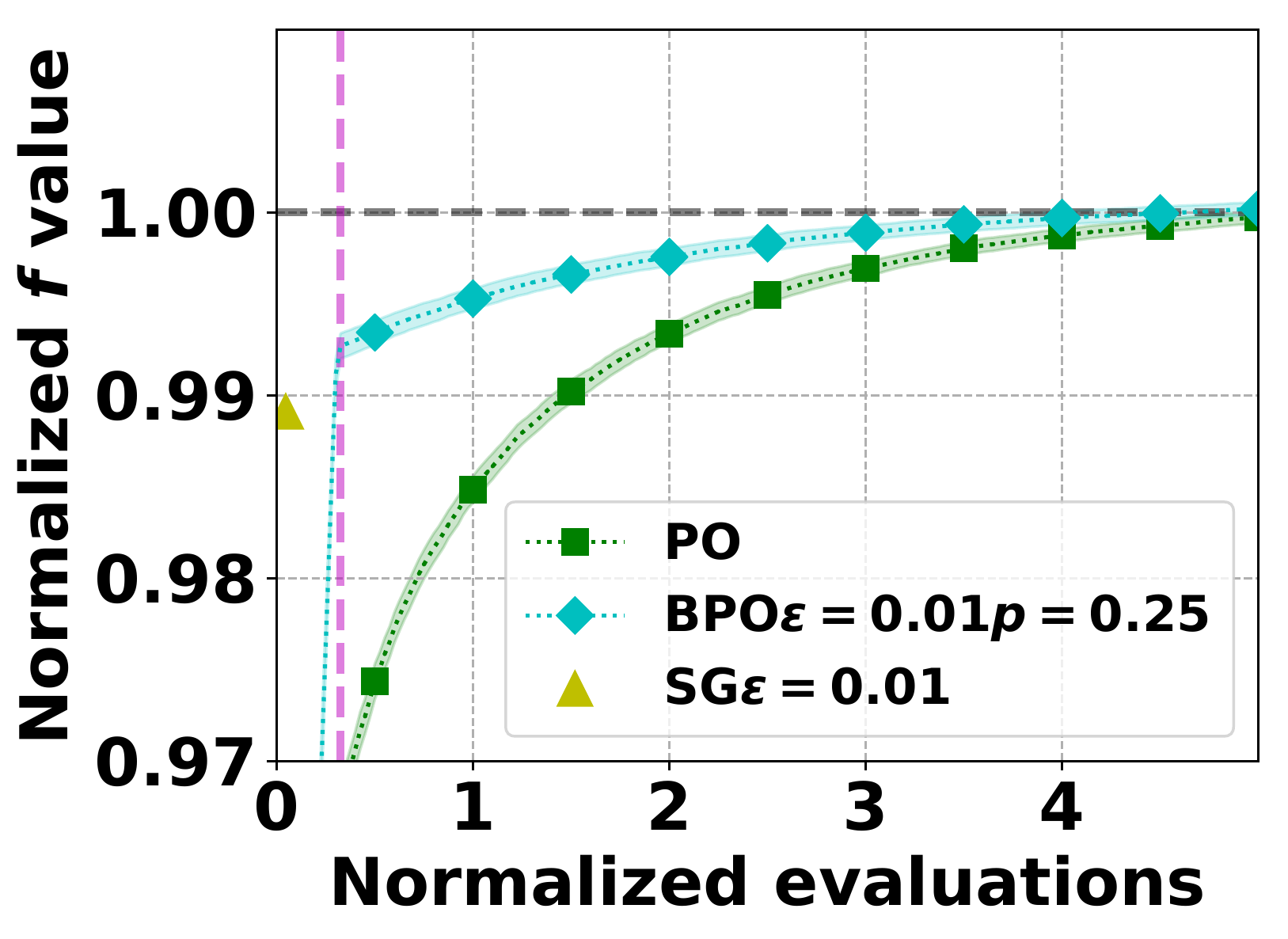}
  }
  \subfigure[Gaussian, $f_{FL}$, \budget$=0.1n$] {
    \label{fig:orkut}
    \includegraphics[width=0.31\textwidth,height=0.17\textheight]{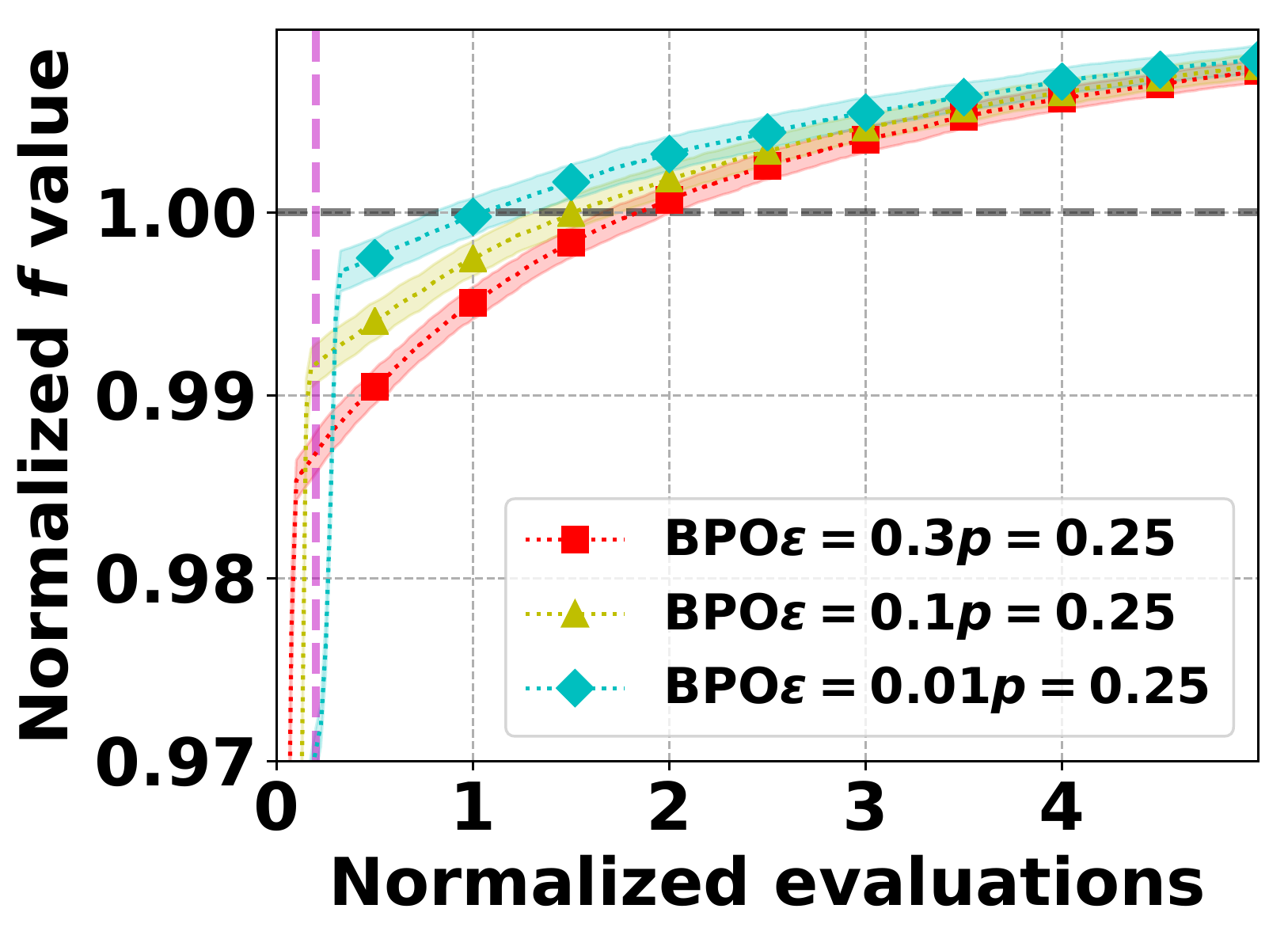}
  }
  \subfigure[Gaussian, $f_{FL}$, \budget$=0.1n$] {
    \label{fig:fb-nonsub-early}
    \includegraphics[width=0.31\textwidth,height=0.17\textheight]{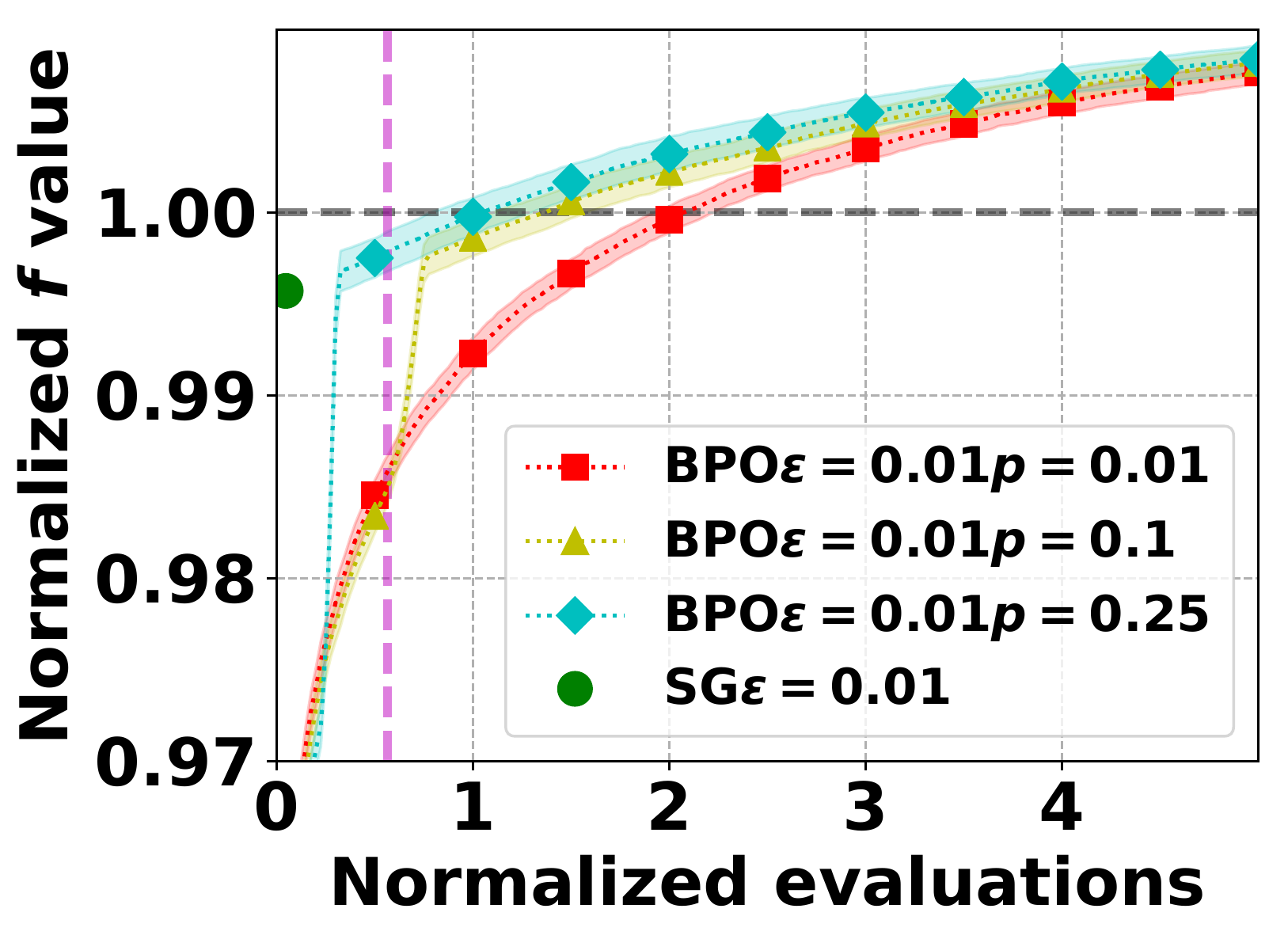}
  }
  \subfigure[Gaussian, $f_{DPP}$, \budget$=0.1n$] {
    \label{figure:gaussiandpp}
    \includegraphics[width=0.31\textwidth,height=0.17\textheight]{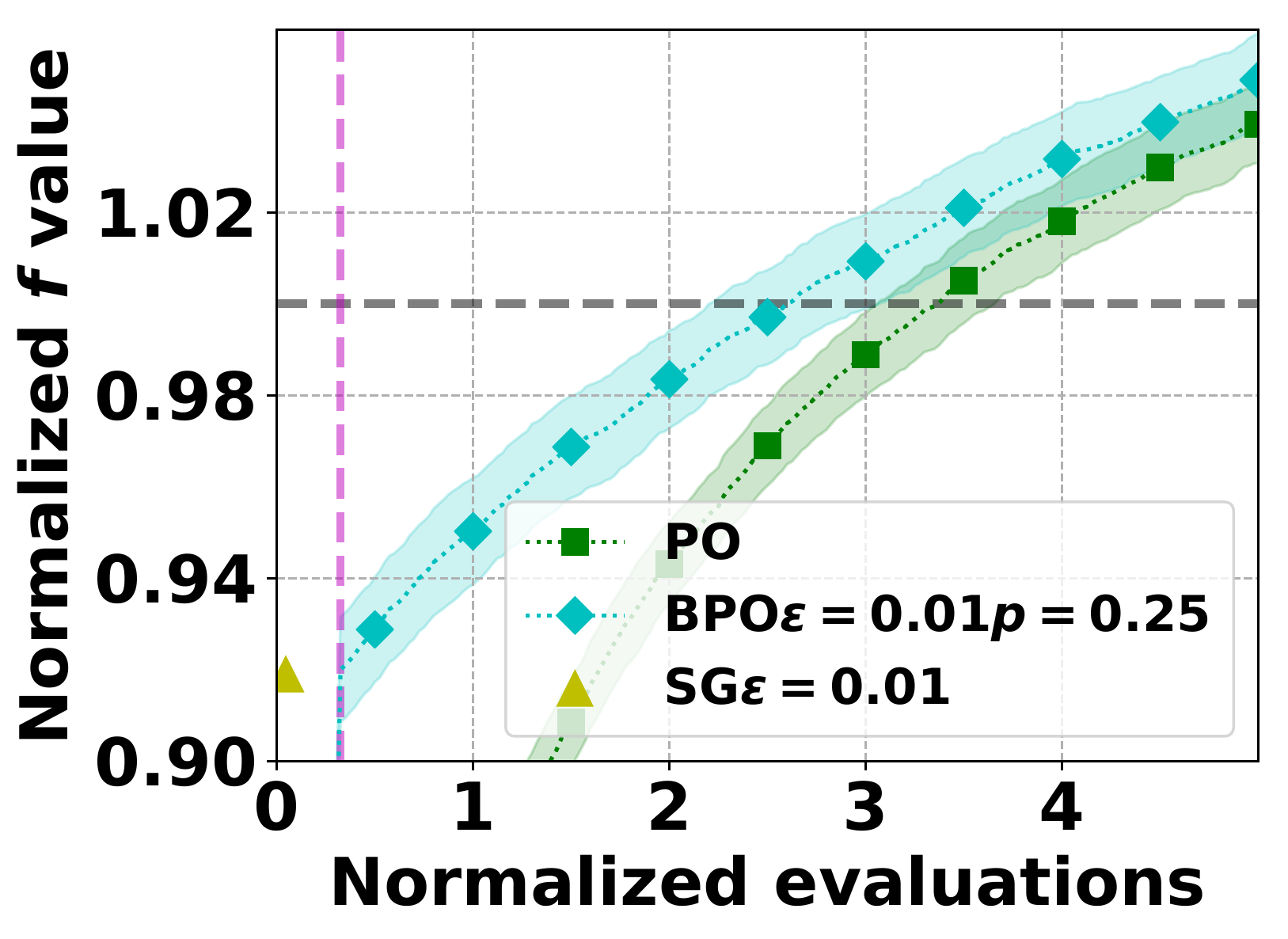}
  }
  \subfigure[CIFAR, $f_{DPP}$, \budget$=0.1n$] {
    \label{figure:cifardpp}
    \includegraphics[width=0.31\textwidth,height=0.17\textheight]{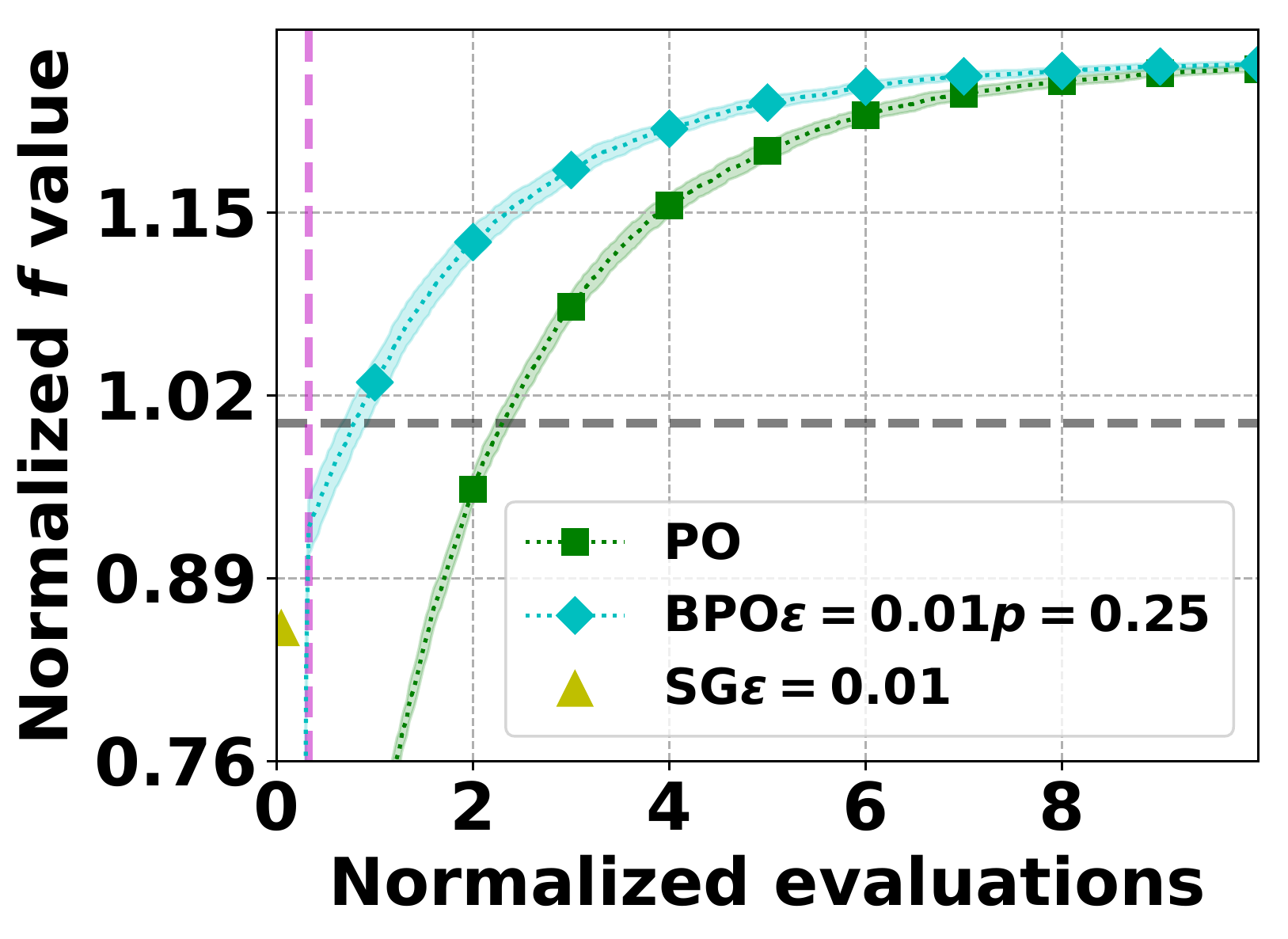}
  }
  \caption{In all plots the $y$-axis is normalized by the standard greedy value on the instance,
    and the $x$-axis is normalized by the number $kn$
    of evaluations required by the standard greedy algorithm.
    The dataset, objective, and value of $\kappa$ are indicated in the caption of each subfigure.}
    \label{fig:exp-main}
\end{figure*}

In this section, the algorithms \ea and \kappabea are evaluated on instances
of data summarization with submodular and non-submodular
objectives $f$.
In summary, the faster runtime for \ea
proven in Theorem \ref{theorem:ea} is demonstrated empirically.
Also,
the results demonstrate that \kappabea quickly finds solutions better than
the standard greedy algorithm, the stochastic greedy algorithm \sg \cite{mirzasoleiman2015lazier}, and \ea.

The algorithms evaluated in Section \ref{section:experiments} are:
\begin{itemize}
\item the standard greedy algorithm \cite{Nemhauser1978}
\item the stochastic greedy (\sg) algorithm of
  \mirza.
\item \ea: the variant of the algorithm of \friedt
  as detailed in Alg. \ref{algorithm:ea} and analyzed in
  Section \ref{section:ea}.
\item \kbea: the version of \bea that biases towards
  only one set in $\pool$, based on the input $\kappa$
  as discussed in Section \ref{section:kappabea}.
\end{itemize}

For both \ea and \kbea, the parameter $P=2\kappa$ is
used on all instances.

\subsection{Application}
In data summarization (DS), we have a set $U$ of data points and we wish to find
a subset of $U$ of cardinality $\kappa$ that best \textit{summarizes} the entire
dataset $U$. \definef takes $X\subseteq U$ to a measure of how effectively $X$
summarizes $U$.
For the ground set $U$, we use: (i) A set of 10 dimensional vectors
drawn from $\kappa$ gaussian distributions (Gaussian), and
(ii) a set of $32\times 32$ color images from the CIFAR-100
dataset \cite{krizhevsky2009learning} each represented by a 3072 dimensional
vector of pixels (CIFAR).
For the objective $f$, we use:
(i) The monotonic and submodular objective $k$-medoid objective \cite{kaufman2009finding}
($f_{MED}$), and (ii) the monotone weakly submodular
objective based on Determinantal Point Process (DPP) \cite{kulesza2012determinantal}
($f_{DPP}$).
A lower bound on the submodularity ratio has been proven for the latter
objective \cite{bian2017guarantees}.

\subsection{Results}
The experimental results are shown in Figure \ref{fig:exp-main}.
All results are the mean of $50$ repetitions
of each algorithm; shaded regions represent one standard deviation from the mean.
Objective and runtime are normalized by the objective value of
and number of queries made ($n\kappa$) by the standard greedy
algorithm. The value of the solution of
the standard greedy algorithm is plotted as a dotted gray horizontal line $y=1$.
The time where $\beta$ in \kbea reached \budget is plotted as a vertical magenta line.

The best solution value
obtained by each algorithm is shown as the rightmost
point in each plot. Both \kbea and \ea were eventually able to find
better solutions than the standard greedy algorithm (i.e., normalized value $> 1.0$), especially on the
non-submodular objective (Figures \ref{figure:gaussiandpp} and \ref{figure:cifardpp}).
Observe that \ea typically exceeds the stochastic greedy objective value
within $c\kappa n$, where $c \le 2$.
This behavior corroborates our theoretical analysis that \ea achieves
a good solution in expectation in $O(\kappa n)$ queries.
In addition, \kbea exceeds the SG value in $cn$ queries.
Finally, for \ea, recall that the theoretical anlaysis shows
that for any $\kappa < P$, the approximation ratio holds.

Because \ea and \kbea can be terminated at any time,
the running time may be compared by observing where any vertical line
intersects the curves for each algorithm. The running time of the
standard greedy corresponds to the line $x=1$ (not plotted).
\kbea reaches solution values closer to the standard greedy algorithm in significantly
faster time than \ea, as expected by its design.
The effect of varying the parameters $\varepsilon$ and $p$ on the behavior of \kbea
is shown in Figs. \ref{fig:orkut}, \ref{fig:fb-nonsub-early},
respectively: smaller $\varepsilon$ leads
to a higher initial increase but the initial increase is slower,
while smaller $p$ slows down the rate of the initial increase.

%


\section{Conclusions}
In this work, we have re-analyzed
the evolutionary algorithm \ea,
originally analyzed for submodular
maximization by \friedt, and showed
that it achieves nearly the optimal worst-case ratio
in expectation
on SM for any $\kappa < P$ in
$O(nP)$ queries.
In contrast, \friedt showed that the optimal worst-case ratio
is achieved in expected $O(nP^2)$ queries.
This improved rate of convergence is supported
by an empirical evaluation.

Further,
it has been shown that changing
the selection process in \ea
results in improved query complexity
to $O(n \log( P ))$ to obtain the same
approximation results. A variant of this algorithm \kbea
is shown empirically to have a much faster
initial rate of convergence to a good solution
than \ea, without sacrificing the long-term
behavior of the \ea algorithm.


\clearpage
\bibliographystyle{style/named}
\bibliography{Bibliography/paper,Bibliography/ijcai2019,Bibliography/old,Bibliography/mend}{}

\begin{thebibliography}{}

\bibitem[\protect\citeauthoryear{Badanidiyuru \bgroup \em et al.\egroup
  }{2014}]{Badanidiyuru2014}
Ashwinkumar Badanidiyuru, Baharan Mirzasoleiman, Amin Karbasi, and Andreas
  Krause.
\newblock {Streaming submodular maximization: massive data summarization on the
  fly}.
\newblock In {\em ACM SIGKDD International Conference on Knowledge Discovery
  and Data Mining (KDD)}, 2014.

\bibitem[\protect\citeauthoryear{Bian \bgroup \em et al.\egroup
  }{2017}]{bian2017guarantees}
Andrew~An Bian, Joachim~M Buhmann, Andreas Krause, and Sebastian Tschiatschek.
\newblock Guarantees for greedy maximization of non-submodular functions with
  applications.
\newblock In {\em International Conference on Machine Learning (ICML)}, 2017.

\bibitem[\protect\citeauthoryear{Bian \bgroup \em et al.\egroup
  }{2020}]{bian2020efficient}
Chao Bian, Chao Feng, Chao Qian, and Yang Yu.
\newblock An efficient evolutionary algorithm for subset selection with general
  cost constraints.
\newblock In {\em AAAI Conference on Artificial Intelligence (AAAI)}, 2020.

\bibitem[\protect\citeauthoryear{Crawford}{2019}]{crawford19}
Victoria~G. Crawford.
\newblock An efficient evolutionary algorithm for minimum cost submodular
  cover.
\newblock In {\em International Joint Conference on Artificial Intelligence
  (IJCAI)}, 2019.

\bibitem[\protect\citeauthoryear{Das and Kempe}{2011}]{Das2011}
Abhimanyu Das and David Kempe.
\newblock {Submodular meets Spectral: Greedy Algorithms for Subset Selection,
  Sparse Approximation and Dictionary Selection}.
\newblock {\em Proceedings of the 28th International Conference on Machine
  Learning (ICML)}, 2011.

\bibitem[\protect\citeauthoryear{Friedrich and Neumann}{2014}]{Friedrich2014}
Tobias Friedrich and Frank Neumann.
\newblock {Maximizing submodular functions under matroid constraints by
  evolutionary algorithms}.
\newblock In {\em International Conference on Parallel Problem Solving from
  Nature (PPSN)}, 2014.

\bibitem[\protect\citeauthoryear{Friedrich \bgroup \em et al.\egroup
  }{2010}]{friedrich2010approximating}
Tobias Friedrich, Jun He, Nils Hebbinghaus, Frank Neumann, and Carsten Witt.
\newblock Approximating covering problems by randomized search heuristics using
  multi-objective models.
\newblock {\em Evolutionary Computation}, 18(4):617--633, 2010.

\bibitem[\protect\citeauthoryear{Kaufman and
  Rousseeuw}{2009}]{kaufman2009finding}
Leonard Kaufman and Peter~J Rousseeuw.
\newblock {\em Finding groups in data: an introduction to cluster analysis},
  volume 344.
\newblock John Wiley \& Sons, 2009.

\bibitem[\protect\citeauthoryear{Kempe \bgroup \em et al.\egroup
  }{2003}]{kempe2003maximizing}
David Kempe, Jon Kleinberg, and {\'E}va Tardos.
\newblock Maximizing the spread of influence through a social network.
\newblock In {\em Proceedings of the ninth ACM SIGKDD international conference
  on Knowledge discovery and data mining}, pages 137--146. ACM, 2003.

\bibitem[\protect\citeauthoryear{Krizhevsky \bgroup \em et al.\egroup
  }{2009}]{krizhevsky2009learning}
Alex Krizhevsky, Geoffrey Hinton, et~al.
\newblock {\em Learning multiple layers of features from tiny images}.
\newblock Technical Report, University of Toronto, 2009.

\bibitem[\protect\citeauthoryear{Kulesza \bgroup \em et al.\egroup
  }{2012}]{kulesza2012determinantal}
Alex Kulesza, Ben Taskar, et~al.
\newblock Determinantal point processes for machine learning.
\newblock {\em Foundations and Trends{\textregistered} in Machine Learning},
  5(2--3):123--286, 2012.

\bibitem[\protect\citeauthoryear{Laumanns \bgroup \em et al.\egroup
  }{2002}]{laumanns2002running}
Marco Laumanns, Lothar Thiele, Eckart Zitzler, Emo Welzl, and Kalyanmoy Deb.
\newblock Running time analysis of multi-objective evolutionary algorithms on a
  simple discrete optimization problem.
\newblock In {\em International Conference on Parallel Problem Solving from
  Nature (PPSN)}, 2002.

\bibitem[\protect\citeauthoryear{Mirzasoleiman \bgroup \em et al.\egroup
  }{2013}]{mirzasoleiman2013distributed}
Baharan Mirzasoleiman, Amin Karbasi, Rik Sarkar, and Andreas Krause.
\newblock Distributed submodular maximization: Identifying representative
  elements in massive data.
\newblock In {\em Advances in Neural Information Processing Systems (NeurIPS)},
  2013.

\bibitem[\protect\citeauthoryear{Mirzasoleiman \bgroup \em et al.\egroup
  }{2015}]{mirzasoleiman2015lazier}
Baharan Mirzasoleiman, Ashwinkumar Badanidiyuru, Amin Karbasi, Jan Vondrak, and
  Andreas Krause.
\newblock {Lazier Than Lazy Greedy}.
\newblock In {\em AAAI Conference on Artificial Intelligence (AAAI)}, 2015.

\bibitem[\protect\citeauthoryear{Nemhauser and Wolsey}{1978}]{Nemhauser1978}
G~L Nemhauser and L~A Wolsey.
\newblock {Best Algorithms for Approximating the Maximum of a Submodular Set
  Function}.
\newblock {\em Mathematics of Operations Research}, 3(3):177--188, 1978.

\bibitem[\protect\citeauthoryear{Neumann and
  Wegener}{2007}]{neumann2007randomized}
Frank Neumann and Ingo Wegener.
\newblock Randomized local search, evolutionary algorithms, and the minimum
  spanning tree problem.
\newblock {\em Theoretical Computer Science}, 378(1):32--40, 2007.

\bibitem[\protect\citeauthoryear{Qian \bgroup \em et al.\egroup
  }{2015a}]{qian2015constrained}
Chao Qian, Yang Yu, and Zhi-Hua Zhou.
\newblock On constrained boolean pareto optimization.
\newblock In {\em International Joint Conference on Artificial Intelligence
  (IJCAI)}, 2015.

\bibitem[\protect\citeauthoryear{Qian \bgroup \em et al.\egroup
  }{2015b}]{qian2015subset}
Chao Qian, Yang Yu, and Zhi-Hua Zhou.
\newblock Subset selection by pareto optimization.
\newblock In {\em Advances in Neural Information Processing Systems (NeurIPS)},
  2015.

\bibitem[\protect\citeauthoryear{Qian \bgroup \em et al.\egroup
  }{2017}]{Qian2017}
Chao Qian, Jing-cheng Shi, Yang Yu, Ke~Tang, and Zhi-hua Zhou.
\newblock {Subset Selection under Noise}.
\newblock In {\em Advances in Neural Information Processing Systems (NeurIPS)},
  2017.

\bibitem[\protect\citeauthoryear{Roostapour \bgroup \em et al.\egroup
  }{2019}]{roostapour2019pareto}
Vahid Roostapour, Aneta Neumann, Frank Neumann, and Tobias Friedrich.
\newblock Pareto optimization for subset selection with dynamic cost
  constraints.
\newblock In {\em AAAI Conference on Artificial Intelligence (AAAI)}, 2019.

\bibitem[\protect\citeauthoryear{Soma and Yoshida}{2016}]{Soma2015a}
Tasuku Soma and Yuichi Yoshida.
\newblock Maximizing monotone submodular functions over the integer lattice.
\newblock In {\em International Conference on Integer Programming and
  Combinatorial Optimization (IPCO)}, 2016.

\end{thebibliography}
\onecolumn
\section{Appendix: Algorithms and Theoretical Results}
\label{appendix:theoretical}
In this section, additional results from Section \ref{section:theoretical} are included.
In particular:
Pseudocode missing from the description of the algorithm \ea is given
in Section \ref{appendix:eaoverview};
Lemmas used in Section \ref{section:ea} and their corresponding proofs are given
in Section \ref{appendix:ealemmas};
Lemmas used in Section \ref{section:bea} and their corresponding proofs are given
in Section \ref{appendix:bealemmas};
Finally, pseudocode for \kappabea is given in Section \ref{appendix:kappabea}

\subsection{\eal (\ea)}
\label{appendix:ea}
\subsubsection{Additional Pseudocode}
\label{appendix:eaoverview}


\begin{algorithm}
   \caption{\mutate$(B)$}
   \label{algorithm:mutate}
   \begin{algorithmic}[1]
     \STATE {\bfseries Input:} $B \subseteq U$.
     \STATE {\bfseries Output:} $B' \subseteq U$.
     \STATE $B'\gets B$
     \FOR{$x \in U$}
     \IF{$x \in B$}
     \STATE With probability $1/n$, $B' \gets B' \setminus \{x\}$.
     \ELSE
     \STATE With probability $1/n$, $B' \gets B' \cup \{x\}$.
     \ENDIF
     \ENDFOR
     \STATE \textbf{return} $B'$
\end{algorithmic}
\end{algorithm}

\subsubsection{Lemmas used for the Proof of Theorem \ref{theorem:ea}}
\label{appendix:ealemmas}
\begin{paragraph}{Lemma \ref{lemma:expectation}.}
  {\it
  At the end of every iteration of \ea
  $$\ex{\betaf} \geq \left(1-\left(1-\frac{1}{|A^*|}\right)^{\omega'}\right)f(A^*)$$
  where $\omega'=\min\{\omega,P-1\}$.}
  \end{paragraph}


\begin{proof}
%
  At the end of any iteration $i$ of \ea, define (i) $\omega_i$ to be the value of
  $\omega$,
  and (ii) $X_i=\betasetline$.
  Define $\omega_0$ and $X_0$ refer to the values at the start of the
  first iteration.
  In order to prove Lemma \ref{lemma:expectation}, it must be
  shown that at the end of any iteration $i\in\{0,1,...,T\}$,
  \begin{align}
    \label{eqn:induction}
    \ex{f(X_i)} \geq \left(1-\left(1-\frac{1}{|A^*|}\right)^{\min\{\omega_i, P-1\}}\right)f(A^*).
  \end{align}
  Equation \ref{eqn:induction} will be proven by induction on
  iteration $i$.
  Equation \ref{eqn:induction} is clearly true for $i=0$,
  since on all runs of \bea $\omega_0=0$ and $X_0=\emptyset$.
  Now suppose that Equation \ref{eqn:induction} is true
  for iteration $t-1\in\{0,...,T-1\}$; it will be shown that it is then true for iteration $t$.

  \noindent Define $E$ to be the event that during iteration $t$ of the for loop
  of \ea, $\omega$ is incremented by 1.
  The following claim establishes that
  expected value of $f(X_{t-1})$ does not depend
  on $E$.
  \begin{claim}
    \label{claim:independenceE}
    $\ex{f(X_{t-1})| E}=\ex{f(X_{t-1})}.$
  \end{claim}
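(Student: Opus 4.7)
The plan is to show that the probability of event $E$ is the same regardless of the state of the algorithm at the end of iteration $t-1$, from which the independence of $E$ and $X_{t-1}$, and hence the claim, follows. Let $\mathcal{F}_{t-1}$ denote the $\sigma$-algebra generated by the random choices in iterations $1,\ldots,t-1$. Since $X_{t-1}$ is $\mathcal{F}_{t-1}$-measurable, it suffices to prove that $\prob{E \mid \mathcal{F}_{t-1}}$ is almost surely equal to a deterministic constant $c$; then $E$ is independent of $\mathcal{F}_{t-1}$, and in particular independent of $f(X_{t-1})$, giving $\ex{f(X_{t-1})\mid E} = \ex{f(X_{t-1})}$.

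To compute this conditional probability, I would split $E$ into its two defining conditions. The first is that on Line \ref{line:selectea} of iteration $t$, the uniformly random index $i_t \in \{0,\ldots,P-1\}$ equals the cardinality of $\betasetline$ evaluated at the start of iteration $t$. This cardinality is well-defined and lies in $\{0,\ldots,P-1\}$ because the pool's dominance rule keeps at most one set of each cardinality and only admits sets of size less than $P$; in particular, $\Sino{|X_{t-1}|}$ exists and equals $X_{t-1}$. Conditional on $\mathcal{F}_{t-1}$, the event $i_t = |X_{t-1}|$ thus has probability exactly $1/P$ by the independence of $i_t$ from the prior history. The second condition, given that $B = X_{t-1}$ is selected, is that \mutate returns $B'$ differing from $B$ in exactly one element, and that element lies in $A^*$. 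Summing the probability $(1/n)(1-1/n)^{n-1}$ over each $a^* \in A^*$ gives $(|A^*|/n)(1-1/n)^{n-1}$ for this event.

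The only subtle point, and the main step to verify carefully, is that this second factor does not depend on $B$. This hinges on the symmetry in \mutate: every element of $U$ is flipped with the same probability $1/n$ regardless of whether it currently lies in $B$, so for each $a^* \in A^*$ there is exactly one way to flip its membership while leaving all other elements unchanged, contributing the same $(1/n)(1-1/n)^{n-1}$ term to the sum irrespective of $B \cap A^*$. Once this symmetry is established, combining the two factors gives $\prob{E \mid \mathcal{F}_{t-1}} = (|A^*|/(nP))(1-1/n)^{n-1}$ deterministically, matching the success probability in Lemma \ref{lemma:bernoulli} and completing the argument. Any remaining steps are routine measure-theoretic bookkeeping to translate ``constant conditional probability'' into the independence of $E$ from the $\mathcal{F}_{t-1}$-measurable random variable $f(X_{t-1})$.
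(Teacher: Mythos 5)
Your proposal is correct and follows essentially the same route as the paper: the paper likewise computes the probability of $E$ at the start of iteration $t$ as $\frac{1}{P}\sum_{x\in A^*}\left(1-\frac{1}{n}\right)^{n-1}\frac{1}{n}$, observes that this value does not depend on $f(X_{t-1})$, and concludes independence via Bayes' theorem. Your conditioning on the full history $\sigma$-algebra is a somewhat more careful formalization of the same argument, but it is not a different proof.
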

  \begin{proof}
      At the beginning of iteration $t$ the probability that
      $\omega$ will be incremented is
      \begin{align*}
        \frac{1}{P}\sum_{x\in A^*}\left(1-\frac{1}{n}\right)^{n-1}\frac{1}{n}.
      \end{align*}
      This does not depend on the value of $f(X_{t-1})$, so
      for all $\alpha\in\mathbb{R}_{\geq 0}$, $P(E|f(X_{t-1})=\alpha)=P(E)$.
      Then Bayes' theorem gives that
      $P(f(X_{t-1})=\alpha|E) = P(f(X_{t-1})=\alpha)$,
      which implies Claim \ref{claim:independenceE}.
  \end{proof}

\noindent  Inequality \ref{eqn:induction} is proven by breaking up into the two cases
  $E$ and $\neg E$, and then applying the law of total probability.
  If $E$ did not occur,
  then it follows from Lemma \ref{lemma:replace} that
  \begin{align}
    \ex{f(X_t)|\neg E} &\geq \ex{f(X_{t-1})| \neg E } \nonumber\\
    &= \ex{f(X_{t-1}) } \nonumber \\
    &\ge \left(1-\left(1-\frac{1}{|A^*|}\right)^{\min\{\omega_t, P-1\}}\right)f(A^*), \label{eqn:327899}
  \end{align}
  where the last inequality follows from the inductive assumption
  and the fact that $\omega_t = \omega_{t-1}$ when conditioning
  on $\neg E$.

  The proof will proceed by
  considering arbitrary but fixed runs of \ea.
  Consider runs of \ea where $E$ did occur.
  If $E$ occurs, then
  during iteration $t$, $X_{t-1}$ is selected
  on Line \ref{line:selectea} and
  then \mutate results in the membership of a single $a^*\in A^*$ being flipped
  in $X_{t-1}$, i.e. it is either the case that \mutate returns $X_{t-1}\cup\{a^*\}$
  for $a^*\in A^*\setminus X_{t-1}$
  or $X_{t-1}\setminus\{a^*\}$ for $a^*\in A^*\cap X_{t-1}$.
  \begin{claim}
    \label{claim:atleast}
    The following holds:
    $\ex{f(X_t)|E} \geq \ex{f(X_{t-1}\cup\{a^*\})|E}.$
  \end{claim}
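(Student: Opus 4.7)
The plan is to establish the inequality pointwise on the event $E$ and then take conditional expectations. Fix a run of \ea for which $E$ occurs at iteration $t$. By the definition of $E$, at this iteration $B=X_{t-1}$ is selected on Line \ref{line:selectea} and \mutate returns a set $B'$ that differs from $X_{t-1}$ in the membership of a single element $a^*\in A^*$; moreover, since $\omega$ is incremented exactly when $E$ occurs, $\omega_t=\omega_{t-1}+1$. It therefore suffices to show that $f(X_t)\geq f(X_{t-1}\cup\{a^*\})$ for every such realization, after which the claim follows by monotonicity of conditional expectation.

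I would split into two cases based on whether \mutate adds or removes $a^*$. In the addition case ($a^*\notin X_{t-1}$), we have $B'=X_{t-1}\cup\{a^*\}$ and $|B'|=|X_{t-1}|+1\leq\omega_{t-1}+1=\omega_t$. Since $B'$ was returned by \mutate at iteration $t$, Lemma \ref{lemma:replace} (applied with $a=|B'|$) yields a set $Y\in\pool$ at the end of iteration $t$ with $|Y|\leq|B'|\leq\omega_t$ and $f(Y)\geq f(B')$. Because $X_t=\argmaxfb{\omega_t}$, this gives $f(X_t)\geq f(Y)\geq f(B')=f(X_{t-1}\cup\{a^*\})$.

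In the removal case ($a^*\in X_{t-1}$), we simply have $X_{t-1}\cup\{a^*\}=X_{t-1}$, so it suffices to prove $f(X_t)\geq f(X_{t-1})$. If $X_{t-1}=\emptyset$, this is immediate. Otherwise $X_{t-1}$ is a non-empty element of $\pool$ at the end of iteration $t-1$, so it must have been returned by \mutate at some iteration $i\leq t-1$; applying Lemma \ref{lemma:replace} with $a=|X_{t-1}|\leq\omega_{t-1}\leq\omega_t$ yields some $Y\in\pool$ at the end of iteration $t$ with $|Y|\leq|X_{t-1}|\leq\omega_t$ and $f(Y)\geq f(X_{t-1})$. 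Hence $f(X_t)\geq f(Y)\geq f(X_{t-1})=f(X_{t-1}\cup\{a^*\})$.

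Combining the two cases gives the pointwise inequality $f(X_t)\geq f(X_{t-1}\cup\{a^*\})$ on $E$, and taking conditional expectations completes the claim. I expect the main subtlety to lie in the removal case, since the argument there is not about the newly mutated $B'$ but about the persistence of $X_{t-1}$ itself in $\pool$; the resolution is just to observe that any non-empty $X_{t-1}$ was itself once output by \mutate, so that Lemma \ref{lemma:replace} applies uniformly.
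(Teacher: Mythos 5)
Your proof is correct and follows essentially the same route as the paper's: a pointwise argument on the event $E$, split into the cases where \mutate adds or removes $a^*$, with Lemma \ref{lemma:replace} invoked in each case and conditional expectation taken at the end. Your explicit treatment of the $X_{t-1}=\emptyset$ edge case in the removal branch is in fact slightly more careful than the paper, which asserts without qualification that $X_{t-1}$ was returned by \mutate on some earlier iteration.
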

  \begin{proof}
    If $E$ occurs, then \mutate returns either $X_{t-1}\cup\{a^*\}$ for $a^*\in A^*\setminus X_{t-1}$
    or $X_{t-1}\setminus\{a^*\}$ for $a^*\in A^*\cap X_{t-1}$. Let $E_1$ be the former
    event and let $E_2$ be the latter event.

    Consider a particular run of \ea where $E$ occurs.
    Then $|X_{t-1}\cup\{a^*\}|\leq\omega_{t-1}+1=\omega_t$.
    If $E_1$ occurs on this run, then $X_{t-1}\cup\{a^*\}$ was returned by \mutate
    on iteration $t$
    and therefore $f(X_t)\geq f(X_{t-1}\cup\{a^*\})$ by Lemma \ref{lemma:replace}.
    If $E_2$ occurs on this run, then $X_{t-1}\cup\{a^*\}=X_{t-1}$. $X_{t-1}$ was
    returned by \mutate on some run before $t$ since $X_{t-1}\in\pool$ at the end
    of iteration $t-1$. Therefore $f(X_t)\geq f(X_{t-1}\cup\{a^*\})$ by Lemma
    \ref{lemma:replace}.
    Then for any run of \ea where $E$ occurs $f(X_t)\geq f(X_{t-1}\cup\{a^*\})$
    and so Claim \ref{claim:atleast} follows.
%
%
  \end{proof}

  Next, it holds that
  \begin{align}
    \ex{f(X_t)|E} &\overset{(a)}{\geq} \ex{f(X_{t-1}\cup\{a^*\})|E} \nonumber \\
    &= \ex{f(X_{t-1})|E} + \ex{f(X_{t-1}\cup\{a^*\})-f(X_{t-1})|E}\nonumber \\
    &\overset{(b)}{\geq} \ex{f(X_{t-1})|E} + \frac{1}{|A^*|}(f(A^*)-\ex{f(X_{t-1})}|E)\nonumber \\
    &\overset{(c)}{=} \ex{f(X_{t-1})} + \frac{1}{|A^*|}(f(A^*)-\ex{f(X_{t-1})})\nonumber \\
    &\overset{(d)}{\geq} \left(1-\left(1-\frac{1}{|A^*|}\right)^{\omega_{t-1}+1}\right)f(A^*)\nonumber \\
    &\overset{(e)}{=} \left(1-\left(1-\frac{1}{|A^*|}\right)^{\omega_t}\right)f(A^*) \label{eqn:xssda3}
  \end{align}
  where (a) follows from Claim \ref{claim:atleast};
  (b) follows from Lemma \ref{lemma:gain};
  (c) is by Claim \ref{claim:independenceE};
  (d) is the inductive assumption;
  (e) is because by definition of $E$, $\omega_t=\omega_{t-1}+1$.
  Finally, the inductive step follows by Equations \ref{eqn:327899}, \ref{eqn:xssda3}
  and the law of total probability. Therefore
  Lemma \ref{lemma:expectation} is proven.
\end{proof}


\begin{paragraph}{Lemma \ref{lemma:replace}}
  \it{
    Let $Y\subseteq U$ and $|Y|\leq a < P$. If $Y$ is returned by \mutate during
    iteration $i$ of \ea, then at the end of any iteration $j\geq i$ it holds that
    $\maxf{a} \geq f(Y).$
  }
\end{paragraph}
\begin{proof}
  First, notice that once $Y$ has been returned by \mutate, there exists
  $X\in\pool$ such that $Y\preceq X$ from that point in \ea on.
  Consider the end of any iteration $j\geq i$. Let $Y'$ be $Y'\in\pool$ such that $Y\preceq Y'$
  at the end of iteration $j$.
  Then $f(Y)\leq f(Y')\leq \maxf{a}$ since $|Y'|\leq |Y|\leq a$.
\end{proof}

\begin{paragraph}{Lemma \ref{lemma:bernoulli}.}
  Consider a run of \ea as a series of Bernoulli trials $Y_1,...,Y_T$,
  where each iteration is a
  trial and a success is defined to be when $\omega$ is incremented.
  $Y_1,...,Y_T$ are independent, identically distributed Bernoulli trials where the
  probability of success is
  \begin{align*}
    \frac{1}{P}\sum_{x\in A^*}\left(1-\frac{1}{n}\right)^{n-1}\frac{1}{n} \geq \frac{|A^*|}{enP}.
  \end{align*}
\end{paragraph}
\begin{proof}
  In order for $\omega$ to be incremented on an iteration $t$ (i.e. the trial results in a success),
  the set $X_{t-1}$ must be selected on Line \ref{line:selectea}; this occurs with probability $1/P$.
  In addition, if any set is selected, \mutate
  results in flipping of the membership of a single $a^*\in A^*$ and nothing else,
  with probability
  $\sum_{x\in A^*}\left(1-\frac{1}{n}\right)^{n-1}\frac{1}{n}.$  Therefore,
  the probability of success is $\frac{1}{P} \sum_{x\in A^*}\left(1-\frac{1}{n}\right)^{n-1}\frac{1}{n}$.
  This probability is independent of the iteration $t$, and therefore it
  follows that $Y_1,...,Y_T$ are independent and identically distributed.
\end{proof}

\begin{paragraph}{Lemma \ref{lemma:probability}.}
  Consider a run of \ea as a series of Bernoulli trials $Y_1,...,Y_T$,
  where each iteration is a
  trial and a success is defined to be when $\omega$ is incremented. Then
  \begin{align*}
    P\left(\sum_{i=1}^TY_i < \kappa\right) \leq \frac{1}{n}.
  \end{align*}
\end{paragraph}
\begin{proof}
By Lemma \ref{lemma:bernoulli}, Chernoff's bound
(Lemma \ref{lemma:chernoff}) may be applied to $Y_1,...,Y_T$.
Let $\rho$ be the probability of success for each $Y_i$
(the value of $\rho$ is given in Lemma \ref{lemma:bernoulli}).
Then $\ex{\sum_{i=1}^TY_i}=T\rho$.
Therefore
\begin{align*}
  P\left(\sum_{i=1}^TY_i < \kappa\right)
  &\overset{(a)}{\leq} P\left(\sum_{i=1}^TY_i < T\rho/2\right) \\
  &\overset{(b)}{\leq} e^{-T\rho/8} \\
  &\overset{(c)}{\leq} \epsilon
\end{align*}
where (a) is because $T\geq\eaTa$ combined with the lower bound on $\rho$ given in
Lemma \ref{lemma:bernoulli}; (b) is applying Lemma \ref{lemma:chernoff} with $\eta=1/2$;
and (c) is because $T\geq\eaTb$ combined with the lower bound on $\rho$ given in
Lemma \ref{lemma:bernoulli}.
\end{proof}

\begin{lemma}
  \label{lemma:gain}
  Let $B,X\subseteq U$, $X\neq\emptyset$.
  Suppose that $B$ is input to \mutate, and consider the probability space of
  all possible outputs of \mutate. Let $E$ be the event that \mutate returns
  $B\cup\{x\}$ for some $x\in X\setminus B$ or $B\setminus\{x\}$ for some $x\in X\cap B$. Then
  \begin{align*}
    \ex{\delt{B}{x}|E} \geq \frac{1}{|X|}(f(X)-f(B)).
  \end{align*}
\end{lemma}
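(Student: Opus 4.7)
The plan is to combine a symmetry argument for the mutation process with the classical submodular ``sum-of-marginals'' inequality. Under event $E$, \mutate flips the membership of exactly one element of $X$ and leaves every element of $U \setminus \{x\}$ unchanged; let $\xi$ denote that flipped element, so the quantity we are averaging is $\Delta f(B, \xi)$. Note that if $\xi \in X \cap B$, then the output is $B \setminus \{\xi\}$, but the marginal $\Delta f(B, \xi) = f(B \cup \{\xi\}) - f(B) = 0$; and if $\xi \in X \setminus B$, the output is $B \cup \{\xi\}$ and $\Delta f(B,\xi) \geq 0$.

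First I would compute the conditional distribution of $\xi$ given $E$. Because \mutate flips each element of $U$ independently with probability $1/n$, the probability that exactly $x$ is flipped and no other element of $U$ is flipped equals $(1/n)(1 - 1/n)^{n-1}$, and this number does not depend on $x$ or on whether $x \in B$. Summing over $x \in X$ gives $\Pr(E) = |X| \cdot (1/n)(1 - 1/n)^{n-1}$, hence $\Pr(\xi = x \mid E) = 1/|X|$ for every $x \in X$. Consequently,
\begin{equation*}
\ex{\Delta f(B,\xi) \mid E} \;=\; \frac{1}{|X|} \sum_{x \in X} \Delta f(B,x) \;=\; \frac{1}{|X|} \sum_{x \in X \setminus B} \Delta f(B,x),
\end{equation*}
where the second equality uses that $\Delta f(B,x) = 0$ for $x \in X \cap B$.

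Next I would apply the standard submodular telescoping bound: enumerating $X \setminus B = \{x_1, \ldots, x_m\}$ in any order and using submodularity on each term,
\begin{equation*}
f(X \cup B) - f(B) \;=\; \sum_{i=1}^m \bigl[f(B \cup \{x_1, \ldots, x_i\}) - f(B \cup \{x_1, \ldots, x_{i-1}\})\bigr] \;\leq\; \sum_{x \in X \setminus B} \Delta f(B, x).
\end{equation*}
Combining with monotonicity $f(X \cup B) \geq f(X)$ yields $\sum_{x \in X \setminus B} \Delta f(B,x) \geq f(X) - f(B)$, and dividing by $|X|$ gives the claim.

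There is no serious obstacle here; the proof is short. The only subtle point is the conditional-probability step, where one must verify that the independence and uniformity of the per-coordinate flips in \mutate genuinely make $\xi$ uniform on $X$ given $E$ — in particular, that elements of $X \cap B$ and of $X \setminus B$ both contribute the same probability mass $(1/n)(1-1/n)^{n-1}$ to the event that they alone are flipped.
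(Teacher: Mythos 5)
Your proof is correct and follows essentially the same route as the paper's: the paper likewise observes that conditioned on $E$ the flipped element is uniform on $X$, writes the conditional expectation as $\frac{1}{|X|}\sum_{y\in X}\Delta f(B,y)$, and bounds the sum below by $f(X)-f(B)$ via submodularity and monotonicity. You simply fill in the details the paper leaves implicit (the explicit computation showing each $x\in X$ contributes equal mass $(1/n)(1-1/n)^{n-1}$ to $E$, the vanishing of marginals for $x\in X\cap B$, and the telescoping argument).
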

\begin{proof}
  It is clear from the procedure \mutate that the event $E$ is equivalent to
  uniformly randomly choosing $x\in X$ and then flipping its membership in $B$.
  Then it is the case that
  \begin{align*}
    \ex{\delt{B}{x}|E} &= \frac{1}{|X|}\sum_{y\in X}\delt{B}{y} \\
                       &\overset{a}{\geq} \frac{1}{|X|}(f(X)-f(B))
  \end{align*}
  where (a) follows from the monotonicity and submodularity of $f$.
\end{proof}

\begin{lemma}[Chernoff bound]
  \label{lemma:chernoff}
  Suppose $Y_1,\ldots,Y_T$ are independent random variables taking values in
  $\{0,1\}$. Let $Y$ denote their sum and let $\mu = \ex{Y}$ denote
  the sum's expected value. Then for any $\eta > 0$
  \begin{align*}
    P( Y \le (1 - \eta) \mu ) \le e^{-\eta^2\mu/2}.
  \end{align*}
\end{lemma}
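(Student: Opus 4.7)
The plan is to prove the lower-tail Chernoff bound by the classical moment-generating-function argument: exponentiate, apply Markov's inequality, factor using independence, bound each factor, and then optimize over a free parameter. Concretely, for any $\lambda > 0$ I would rewrite the event as
\[
\prob{Y \le (1-\eta)\mu} \;=\; \prob{e^{-\lambda Y} \ge e^{-\lambda(1-\eta)\mu}}
\]
and apply Markov's inequality to the non-negative random variable $e^{-\lambda Y}$, obtaining $\prob{Y \le (1-\eta)\mu} \le e^{\lambda(1-\eta)\mu}\,\ex{e^{-\lambda Y}}$.

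Next, I would exploit independence of $Y_1,\ldots,Y_T$ to factor $\ex{e^{-\lambda Y}} = \prod_{i=1}^T \ex{e^{-\lambda Y_i}}$. For each Bernoulli summand with mean $p_i = \ex{Y_i}$, a direct computation gives $\ex{e^{-\lambda Y_i}} = 1 - p_i(1-e^{-\lambda})$, which is at most $\exp\bigl(-p_i(1-e^{-\lambda})\bigr)$ by the elementary inequality $1+x \le e^x$. Summing the exponents yields $\ex{e^{-\lambda Y}} \le \exp\bigl(-\mu(1-e^{-\lambda})\bigr)$, so
\[
\prob{Y \le (1-\eta)\mu} \;\le\; \exp\!\bigl(\lambda(1-\eta)\mu - \mu(1-e^{-\lambda})\bigr).
\]

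I would then optimize the exponent over $\lambda > 0$. Differentiating shows the minimizer is at $e^{-\lambda} = 1-\eta$, i.e.\ $\lambda = -\ln(1-\eta)$, which is admissible for $\eta \in (0,1)$ and produces the sharp bound
\[
\prob{Y \le (1-\eta)\mu} \;\le\; \exp\!\bigl(-\mu\bigl[\eta + (1-\eta)\ln(1-\eta)\bigr]\bigr).
\]
The case $\eta \ge 1$ is handled separately and trivially: then $(1-\eta)\mu \le 0$ while $Y \ge 0$ almost surely, so the event in question has probability zero (or, at $\eta = 1$, is the event $Y = 0$, which has probability at most $\prod_i (1-p_i) \le e^{-\mu} \le e^{-\mu/2}$).

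The final step is to weaken the sharp bound to the stated form $e^{-\eta^2\mu/2}$, which reduces to verifying the scalar inequality $\eta + (1-\eta)\ln(1-\eta) \ge \eta^2/2$ on $(0,1)$. I would set $g(\eta) := \eta + (1-\eta)\ln(1-\eta) - \eta^2/2$, observe $g(0) = 0$, and compute $g'(\eta) = -\ln(1-\eta) - \eta$, which is non-negative since $\ln(1-\eta) \le -\eta$ for $\eta \in (0,1)$. The main obstacle, such as it is, is only this last scalar inequality; every other step is a textbook application of Markov, independence, and $1+x \le e^x$, so the proof is essentially mechanical.
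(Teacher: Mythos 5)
Your proof is correct and complete: the exponentiate--Markov--factor--optimize argument is carried out accurately, the optimal choice $e^{-\lambda}=1-\eta$ yields the sharp exponent $-\mu\bigl[\eta+(1-\eta)\ln(1-\eta)\bigr]$, the scalar inequality $\eta+(1-\eta)\ln(1-\eta)\ge\eta^2/2$ is verified properly, and the edge case $\eta\ge 1$ is handled. The paper itself gives no proof of this lemma --- it is invoked as a standard textbook fact --- so there is nothing to compare against; your derivation is the canonical one and would serve as a valid self-contained proof.
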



\subsection{\beal (\bea)}
\subsubsection{Lemmas used for the Proof of Theorem \ref{theorem:bea}}
\label{appendix:bealemmas}
\begin{paragraph}{Lemma \ref{lemma:expectationbea}}
  \it{
  At the end of every iteration of \bea
  $$\ex{\maxfb{\omega}} \geq \left(1-\left(1-\frac{1-\epsilon}{|A^*|}\right)^{\omega'}\right)f(A^*)$$
  where $\omega'=\min\{\omega,P-1\}$.
  }
\end{paragraph}
\begin{proof}
  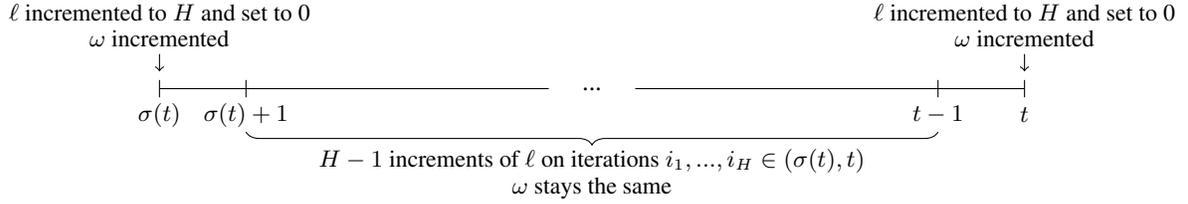
\begin{figure*}[b]
\centering
\begin{tikzpicture}[scale=1.15]
\draw (0,0) -- (4.5,0);
\node at (5,0) {...};
\draw (5.5,0) -- (10,0);

\draw (0,-0.1) -- (0,0.1);
\draw (1,-0.1) -- (1,0.1);
\draw (9,-0.1) -- (9,0.1);
\draw (10,-0.1) -- (10,0.1);

\node at (0,-0.3) {\small$\sigma(t)$};
\node at (1,-0.3) {\small$\sigma(t)+1$};
\node at (9,-0.3) {\small$t-1$};
\node at (10,-0.3) {\small$t$};

\draw[->] (0,0.4) -- (0,0.2);
\draw[->] (10,0.4) -- (10,0.2);

\node at (0,0.7) {\small\begin{tabular}{c}$\ell$ incremented to $H$ and set to 0 \\ $\omega$ incremented\end{tabular}};
\node at (10,0.7) {\small\begin{tabular}{c}$\ell$ incremented to $H$ and set to 0 \\ $\omega$ incremented\end{tabular}};

\draw[decoration={brace,amplitude=5pt},decorate] (9,-0.5) -- (1,-0.5);

\node at (5.0,-1.0) {\small\begin{tabular}{c}$H-1$ increments of $\ell$ on iterations $i_1,...,i_H\in(\sigma(t),t)$ \\ $\omega$ stays the same \end{tabular}};

\end{tikzpicture}
\caption{An illustration of the event $E$ in the proof of Lemma \ref{lemma:expectationbea}.
The line describes events on iterations of \bea starting at $\sigma(t)$ and increasing to $t$.
$\sigma(t)$ is defined to be the iteration where $\omega$ was set to $\omega_{t-1}$.
On iteration $\sigma(t)$, $\ell$ is set to 0 (which prompts the increment of $\omega$).
Then $\ell$ is incremented from 0 to $H$ on some subset of iterations in ($\sigma(t)+1,t$]
including $t$.}

\label{fig:eventE}
\end{figure*}

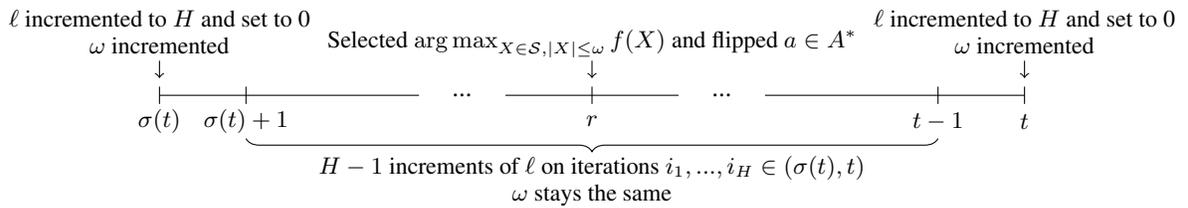
\begin{figure*}[b]
\centering
\begin{tikzpicture}[scale=1.15]
\draw (0,0) -- (3,0);
\node at (3.5,0) {...};
\draw (4,0) -- (6,0);
\node at (6.5,0) {...};
\draw (7,0) -- (10,0);

\draw (0,-0.1) -- (0,0.1);
\draw (1,-0.1) -- (1,0.1);
\draw (5,-0.1) -- (5,0.1);
\draw (9,-0.1) -- (9,0.1);
\draw (10,-0.1) -- (10,0.1);

\node at (0,-0.3) {\small$\sigma(t)$};
\node at (1,-0.3) {\small$\sigma(t)+1$};
\node at (5,-0.3) {\small$r$};
\node at (9,-0.3) {\small$t-1$};
\node at (10,-0.3) {\small$t$};

\draw[->] (0,0.4) -- (0,0.2);
\draw[->] (5,0.4) -- (5,0.2);
\draw[->] (10,0.4) -- (10,0.2);

\node at (0,0.7) {\small\begin{tabular}{c}$\ell$ incremented to $H$ and set to 0 \\ $\omega$ incremented\end{tabular}};
\node at (5,0.6) {\small Selected $\argmaxfb{\omega}$ and flipped $a\in A^*$};
\node at (10,0.7) {\small\begin{tabular}{c}$\ell$ incremented to $H$ and set to 0 \\ $\omega$ incremented\end{tabular}};

\draw[decoration={brace,amplitude=5pt},decorate] (9,-0.5) -- (1,-0.5);

\node at (5.0,-1.0) {\small\begin{tabular}{c}$H-1$ increments of $\ell$ on iterations $i_1,...,i_H\in(\sigma(t),t)$ \\ $\omega$ stays the same \end{tabular}};

\end{tikzpicture}
\caption{An illustration of the event $F$ in the proof of Lemma \ref{lemma:expectationbea}.
Event $F$ is a sub-event of the event $E$, which is depicted in Figure \ref{fig:eventE}.
In addition to the events described in $E$,
on one of the iterations $r$ in ($\sigma(t)+1,t$], the set $X_{r-1}$ is selected and then
\mutate results in the flipping of an element of $A^*$ and nothing else.}

\label{fig:eventF}
\end{figure*}

  For notational simplicity we define $H=H_q$, and $\ell=\ell_q$.
  $\omega_i$ and $X_i$ are defined analogously as in the proof of
  Lemma \ref{lemma:expectation}:
  At the end of any iteration $i$ of \bea, define (i) $\omega_i$ to be the value of
  $\omega$,
  and (ii) $X_i=\argmaxf{\omega}$.
  $\omega_0$ and $X_0$ refer to the values at the start of the
  first iteration.
  We see in Claim \ref{claim:improve} that the objective values of the sequence
  $X_1, X_2,....$ are non-decreasing.
  \begin{claim}
    \label{claim:improve}
    For any run of \bea, if $i\leq j$ then $f(X_i)\leq f(X_j)$.
  \end{claim}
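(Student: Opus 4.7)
The plan is to derive the claim from two monotonicity properties of \bea that together imply $f(X_i) \leq f(X_j)$.

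First, the variable $\omega = \beta_q$ is only ever incremented by the algorithm (see Lines 12--13 of Algorithm \ref{algorithm:bea}); it is never decreased or reset. So for $i \leq j$ we have $\omega_i \leq \omega_j$, meaning the cardinality constraint $|X|\leq \omega$ that appears in the definition of $X_j$ is no stronger than the one used for $X_i$.

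Second, the pool-update rule of \bea is identical to that of \ea: a new set $B'$ enters $\pool$ only if no current member dominates or is equivalent to it, and upon entering it displaces everything it dominates. Consequently Lemma \ref{lemma:replace} transfers verbatim to \bea, so for every cardinality threshold $a < P$ the quantity $\maxfb{a}$ is non-decreasing in the iteration count, and in particular any set ever returned by \mutate is thereafter dominated (in the $\preceq$ sense) by some member of $\pool$.

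Combining these, fix $i \leq j$. If $X_i = \emptyset$ then $f(X_i) = 0 \leq f(X_j)$ by non-negativity of $f$, and we are done. Otherwise $X_i \in \pool_i$ with $|X_i|\leq \omega_i$, and $X_i$ must have been produced by \mutate on some earlier iteration $r \leq i$. Applying Lemma \ref{lemma:replace} with $a = |X_i|$ at the end of iteration $j$ yields a set $Y' \in \pool_j$ with $|Y'| \leq |X_i| \leq \omega_i \leq \omega_j$ and $f(Y') \geq f(X_i)$. Since $Y'$ is feasible in the maximization defining $X_j$, we conclude $f(X_j) \geq f(Y') \geq f(X_i)$.

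The argument is essentially bookkeeping; the only substantive check is that Lemma \ref{lemma:replace} genuinely applies to \bea, which it does because the dominance-based pool maintenance is unchanged from \ea. I do not anticipate any real obstacle beyond noting that $X_i$ need not be the same set as $X_j$ — the conclusion is only about their $f$-values, which is exactly what the Lemma \ref{lemma:replace} plus monotonicity of $\omega$ deliver.
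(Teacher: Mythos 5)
Your proof is correct and follows essentially the same route as the paper's: monotonicity of $\omega$ (so $\omega_i\leq\omega_j$) combined with Lemma \ref{lemma:replace} applied to $X_i$ at the end of iteration $j$, noting that every non-empty pool member was once returned by \mutate. Your explicit handling of the $X_i=\emptyset$ case is a small point of extra care the paper glosses over, but the substance is identical.
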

  \begin{proof}
    $\omega$ only increases throughout \bea, therefore $\omega_i\leq\omega_j$.
    Then if \poolb is the pool at the end of iteration $j$
    \begin{align*}
      f(X_j) &= \maxf{\omega_j} \\
      &\geq \maxf{\omega_i} \\
      &\overset{a}{\geq} f(X_i)
    \end{align*}
    where (a) follows from Lemma \ref{lemma:replace}.
  \end{proof}

  In order to prove Lemma \ref{lemma:expectationbea} it must be
  shown that for any iteration $i\in\{0,1,...,T\}$,
  \begin{align}
    \label{eqn:inductionbea}
    \ex{f(X_i)} \geq \left(1-\left(1-\frac{1-\epsilon}{|A^*|}\right)^{\min\{\omega_i, P-1\}}\right)f(A^*)
  \end{align}
  which will be proven using induction on $i$.
  Equation \ref{eqn:inductionbea} is clearly true for $i=0$,
  since on all runs of \bea $\omega_0=0$ and $X_0=\emptyset$.
  Now suppose that Equation \ref{eqn:inductionbea} is true
  for every iteration $i < t\in\{1,...,T\}$;
  It will now be shown that Equation \ref{eqn:inductionbea} is true for iteration $t$.

  Define the random variable $\sigma(t)=\min\{i\in\{0,...,T-1\}:\omega_i=\omega_{t-1}\}$.
  In other words $\sigma(t)$ is the iteration during which $\omega$ was incremented to equal
  $\omega_{t-1}$, which is the value of $\omega$ at the beginning of iteration $t$.
  Then $\omega$ is not incremented on any iteration in $(\sigma(t),t)$.
  We will be using the inductive assumption on iteration $\sigma(t)$ in order to prove
  Equation \ref{eqn:inductionbea}.
  Define the events $E$ and $F$ as follows:
  \begin{itemize}[noitemsep]
    \item[(i)] Event $E$ is that during iteration $t$ of \bea, $\omega$ is incremented.
    Observe that if $E$ occurred, then during iteration $t$ of \bea $\ell$ was incremented to
    equal $H$. Since $\ell$ is set to 0 during iteration $\sigma(t)$,
    this implies that that on $H$ distinct
    iterations of \bea $i_1,...,i_H\in(\sigma(t),t]$ $\ell$ was incremented.
    In other words, on each of these iterations
    $j\in\{i_1,...,i_H\}$ the set $X_{j-1}$ is chosen for mutation.
    Event $E$ is illustrated in Figure \ref{fig:eventE}.
    \item[(ii)] Event $F$ is defined to be that $E$ occurred and further that during one of
    the iterations $r\in\{i_1,...,i_H\}$
    \mutate resulted in the membership of a single element being flipped and it was
    an element of $A^*$, i.e. it is either the case that \mutate returns $X_{r-1}\cup\{a^*\}$
    for $a^*\in A^*\setminus X_{r-1}$
    or $X_{r-1}\setminus\{a^*\}$ for $a^*\in A^*\cap X_{r-1}$.
    Event $F$ is illustrated in Figure \ref{fig:eventF}.
  \end{itemize}
  We now prove a couple of results concerning the events $E$ and $F$.
  Claim \ref{claim:probF} states that if $E$ occurs then $F$ does as well with
  high probability. Claim \ref{claim:independenceF} states that the expected value of
  $f(X_{\sigma(t)})$ is independent of event $F$.

  \begin{claim}
    $P(F|E) \geq 1-\epsilon$.
    \label{claim:probF}
  \end{claim}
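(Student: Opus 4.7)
The plan is to observe that, under event $E$, there are $H = H_q$ distinct iterations $i_1,\ldots,i_H \in (\sigma(t),t]$ on which \mutate is invoked (on the set $X_{r-1}$ at iteration $r$), and $F$ fails conditional on $E$ only if on every one of these $H$ invocations \mutate fails to flip exactly one element whose identity lies in $A^*$. The key computation is that, regardless of the set $X_{r-1}$ supplied to \mutate at iteration $r$, the probability (over the fresh coin flips of \mutate at that iteration) that its output differs from the input in exactly one element and that element lies in $A^*$ equals
\[
 |A^*|\cdot\frac{1}{n}\left(1-\frac{1}{n}\right)^{n-1} \;\geq\; \frac{|A^*|}{en},
\]
since this is simply the chance that, among the $n$ independent Bernoulli$(1/n)$ coin flips carried out by \mutate, exactly one succeeds and it corresponds to an element of $A^*$. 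Importantly this bound does not depend on the current $X_{r-1}$.

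I would then argue that these $H$ success events are mutually independent once one conditions on the algorithmic history that realizes $E$. The indices $i_1,\ldots,i_H$ and the particular sets $X_{i_r - 1}$ are certainly random, but the target event at iteration $i_r$ (\mutate flips exactly one element and that element is in $A^*$) depends only on the $n$ fresh coin flips drawn by \mutate at that iteration, and these are independent of all prior algorithmic randomness. Hence
\[
 P(\neg F \mid E) \;\leq\; \left(1-\frac{|A^*|}{en}\right)^{H} \;\leq\; \exp\!\left(-\frac{H\,|A^*|}{en}\right).
\]
Plugging in $H = H_q = e\ln(1/\epsilon)/\xi^q$ together with the lower bound $|A^*| > \xi^q P$ from Equation \ref{eqn:boundopt} (and using that $P$ is on the order of $n$, so $P/n \geq 1$ in the relevant regime), the exponent is at least $\ln(1/\epsilon)$, which yields $P(\neg F\mid E)\leq \epsilon$ and therefore the desired $P(F\mid E)\geq 1-\epsilon$.

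The main obstacle I anticipate is formalizing the conditional independence of the $H$ per-iteration success events, since both the identities of $i_1,\ldots,i_H$ and the sets $X_{i_r-1}$ are themselves random and tangled with the algorithm's earlier choices. A clean way to handle this is to condition on the entire algorithmic history consistent with $E$ (the \uniform and \flip outcomes on every iteration, and which sets are in \poolb at each step), observe that under this conditioning the mutation coin flips at each $i_r$ remain independent and $\mathrm{Bernoulli}(1/n)$, apply the product bound above to obtain $P(F \mid E, \text{history}) \geq 1 - \epsilon$ for every realizing history, and finally take expectation over histories to get the unconditional bound $P(F\mid E)\geq 1-\epsilon$.
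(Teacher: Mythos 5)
Your proof follows essentially the same route as the paper's: bound the per-invocation probability that \mutate flips exactly one element of $A^*$ (and nothing else) by $\frac{|A^*|}{n}\left(1-\frac{1}{n}\right)^{n-1}\geq \frac{|A^*|}{en}$, use the independence of the fresh mutation coins across the $H=H_q$ conditioning iterations to obtain $P(\neg F\mid E)\leq\left(1-\frac{|A^*|}{en}\right)^{H}\leq e^{-H|A^*|/(en)}$, and conclude via Equation \ref{eqn:boundopt}. Your extra care in conditioning on the algorithmic history to justify the product bound, and your explicit flagging that the final exponent is $\ln(1/\epsilon)\cdot P/n$ and hence requires $P\geq n$ (a point the paper's step (a) leaves implicit), are refinements of, not departures from, the paper's argument.
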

  \begin{proof}
    At the beginning of any iteration
    the probability that \mutate will result in the flipping of a single element
    in $A^*$ and no other changes is
    \begin{align*}
      \rho = \sum_{x\in A^*}\left(1-\frac{1}{n}\right)^{n-1}\frac{1}{n}
      = \frac{|A^*|}{n}\left(1-\frac{1}{n}\right)^{n-1}
      \geq \frac{|A^*|}{en}.
    \end{align*}
    Therefore the probability that $\argmaxf{\omega}$ has been selected on Line
    \ref{line:selectbea} $H$ times and on
    none of those iterations did \mutate result only in the flipping of a single element
    in $A^*$ is
    \begin{align*}
      P(\neg F|E) = \left(1-\rho\right)^H \leq \left(1-\frac{|A^*|}{en}\right)^H
      \leq e^{-H|A^*|/(en)}
      \overset{a}{\leq} \epsilon
    \end{align*}
    where (a) follows from Equation \ref{eqn:boundopt}.
    Claim \ref{claim:probF} then follows.
  \end{proof}

  \begin{claim}
    \label{claim:independenceF}
    $\ex{f(X_{\sigma(t)})|F} = \ex{f(X_{\sigma(t)})}.$
  \end{claim}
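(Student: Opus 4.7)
The plan is to mirror the proof of Claim \ref{claim:independenceE}: I will show that $P(F \mid f(X_{\sigma(t)}) = \alpha) = P(F)$ for every $\alpha \in \mathbb{R}_{\geq 0}$, and then invoke Bayes' theorem exactly as there to deduce $P(f(X_{\sigma(t)}) = \alpha \mid F) = P(f(X_{\sigma(t)}) = \alpha)$, which gives the desired equality of expectations.

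To establish the required independence, I would first partition the randomness driving \bea into two independent families. Family (a) consists, for every iteration, of the outcome of \flip and the uniform draw of $j$ on Line \ref{line:random}. Family (b) consists of the uniform draw of $i$ on Line \ref{line:random2} and the internal coin flips inside \mutate. These two families are independent by construction. Family (a) alone determines the dynamics of $\ell_q$ and $\omega = \beta_q$, so $\sigma(t)$ is a measurable function of family (a) only, while $X_{\sigma(t)}$ depends on both families through the evolving pool. The problem is thus reduced to showing that, conditional on any outcome of family (a) (and hence on any particular value $\sigma(t) = s$), the event $F$ is independent of $X_s$.

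The key observation, analogous to the computation in Claim \ref{claim:probF}, is that for any fixed input $B \subseteq U$ the probability that \mutate produces a set differing from $B$ in exactly one element, with that element lying in $A^*$, equals $(|A^*|/n)(1 - 1/n)^{n-1}$, which is \textbf{independent of $B$}. Consequently, conditional on $\sigma(t) = s$, the probability of $F$ is a function only of $H$, $|A^*|$, and $n$, and is independent of the realization of $X_s$. Marginalizing over the possible values of $\sigma(t)$ then yields $P(F \mid X_{\sigma(t)} = x) = P(F)$ for every $x$, and Bayes' theorem concludes the proof. The main obstacle is precisely that the \mutate inputs on the iterations in $(\sigma(t), t]$ can themselves depend on $X_{\sigma(t)}$ through the evolving pool; this obstruction is defused exactly by the $B$-independence of the probability of the relevant \mutate outcome noted above.
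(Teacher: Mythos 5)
Your overall strategy is the same as the paper's: show $P(F\mid f(X_{\sigma(t)})=\alpha)=P(F)$ and conclude via Bayes' theorem, with the engine being the observation that the probability that \mutate flips exactly one element of $A^*$ and nothing else equals $(|A^*|/n)(1-1/n)^{n-1}$ regardless of the input set $B$. Your explicit split of the randomness into selection coins (family (a)) and mutation coins (family (b)) is a clean way to justify why the \mutate outcomes on the iterations $i_1,\dots,i_H\in(\sigma(t),t]$ are fresh relative to $X_{\sigma(t)}$, and in that respect it is arguably more careful than the paper's prose.

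However, the final marginalization step has a gap. Conditional on a full outcome $g$ of family (a) you do get $P(F\mid G=g,\,X_{\sigma(t)})=P(F\mid G=g)$; but $P(F\mid G=g)$ is not constant across $g$: it equals $0$ when the family-(a)-measurable event $E$ (that $\omega$ is incremented at iteration $t$) fails under $g$, and $1-(1-\rho)^H$ with $\rho=(|A^*|/n)(1-1/n)^{n-1}$ when $E$ holds. Since $X_{\sigma(t)}$ also depends on family (a), averaging over $g$ yields $P(F\mid f(X_{\sigma(t)})=\alpha)=\left(1-(1-\rho)^H\right)P(E\mid f(X_{\sigma(t)})=\alpha)$, so you still need $P(E\mid f(X_{\sigma(t)})=\alpha)=P(E)$. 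This is precisely the first half of the paper's proof, which argues that $\ell_q$ is incremented with probability $p/M$ on every iteration irrespective of the pool's contents, hence whether $H$ increments accumulate in $(\sigma(t),t]$ is independent of $f(X_{\sigma(t)})$; the paper then factors $P(F\mid\cdot)=P(F\mid E\wedge\cdot)\,P(E\mid\cdot)$. Relatedly, your assertion that conditional on $\sigma(t)=s$ the probability of $F$ depends only on $H$, $|A^*|$, and $n$ is not correct as stated, since it also carries the factor $P(E\mid\sigma(t)=s)$. The argument is fixable within your framework by adding this independence-of-$E$ step, but as written the reduction does not close.
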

  \begin{proof}
    Let $\alpha\in\mathbb{R}_{\geq 0}$.
    First it is shown that $P(E|f(X_{\sigma(t)})=\alpha)=P(E)$.
    On any iteration of \bea, the probability that $\ell$ will be incremented is
    \beaprob. Whether it increments $H$ times in the interval $(\sigma(t),t]$
    does not depend on the value of $f(X_{\sigma(t)})$, therefore
    $P(E|f(X_{\sigma(t)})=\alpha)=P(E)$.

    Next, it is shown that $P(F|f(X_{\sigma(t)})=\alpha) = P(F)$.
    If it is assumed that $E$ occurred and so $\ell$ is incremented $H$ times in
    the interval $(\sigma(t),t]$, then on each of these increments the probability
    that \mutate results only in the flipping of a single element
    in $A^*$ is
    \begin{align*}
          \sum_{x\in A^*}\left(1-\frac{1}{n}\right)^{n-1}\frac{1}{n}.
    \end{align*}
    This does not depend on the value of $f(X_{\sigma(t)})$,
    therefore $P(F|E\land f(X_{\sigma(t)})=\alpha)=P(F|E)$.

    Then we use the above two facts to see that
    $P(F|f(X_{\sigma(t)})=\alpha)=P(F|E\land f(X_{\sigma(t)})=\alpha)P(E|f(X_{\sigma(t)})=\alpha)
    =P(F|E)P(E)=P(F)$.
    Then Bayes' theorem gives that
    $P(f(X_{\sigma(t)})=\alpha|F) = P(f(X_{\sigma(t)})=\alpha)$,
    which implies the statement of Claim \ref{claim:independenceF}.
  \end{proof}

  Equation \ref{eqn:inductionbea} will be proven by breaking up runs of \bea into
  those where $E$ occurs and doesn't occur, then further breaking up runs of
  \bea where $E$ occurs into those where $F$ occurs and doesn't occur, and finally
  applying the law of total probability.
  The case of runs of \bea where $E$ does not occur is an easy result of
  Lemma \ref{lemma:replace}:
  \begin{align}
    \ex{f(X_t)|\neg E} \geq \left(1-\left(1-\frac{1-\epsilon}{|A^*|}\right)^{\min\{\omega_t, P-1\}}\right)f(A^*). \label{eqn:1}
  \end{align}

  We now consider runs of \bea where event $F$ occurs.
  If $F$ occurred, then on some iteration $r\in\{i_1,...,i_H\}$
  \mutate resulted in the membership of an $a^*\in A^*$ being flipped and no other changes.
  Now two needed claims are proven. Claim \ref{claim:existY} states that if $F$ occurs
  then the expected value of $f(X_t)$ is at least the expected value of $f(X_{r-1}\cup\{a^*\})$,
  while Claim \ref{claim:gainbea} gives a lower bound on the expected value of
  $f(X_{r-1}\cup\{a^*\})$. Together, these two claims provide a lower bound on the expected
  value of $f(X_t)$ if $F$ occurs.
  \begin{claim}
    \label{claim:existY}
    $\ex{f(X_t)|F} \geq \ex{f(X_{r-1}\cup\{a^*\})|F}$
  \end{claim}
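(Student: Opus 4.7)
The plan is to mirror Claim \ref{claim:atleast} from the proof of Theorem \ref{theorem:ea}, with the role of iteration $t$ there played here by the mutation iteration $r$ witnessing event $F$. Concretely, I will establish the pointwise inequality $f(X_t)\geq f(X_{r-1}\cup\{a^*\})$ on every run of \bea in $F$; the claim then follows immediately by taking expectation conditional on $F$.

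Fix a run on which $F$ occurs. By definition of $F$ there is an iteration $r\in\{i_1,\ldots,i_H\}\subseteq(\sigma(t),t]$ where two events coincide: first, the biased branch of the selection process is taken, so by Lines \ref{line:random}--\ref{line:seti} of Algorithm \ref{algorithm:bea} the set chosen for mutation is $\argmaxfb{\beta_q}=\argmaxfb{\omega_{r-1}}=X_{r-1}$; second, \mutate flips the membership of a single $a^*\in A^*$ in $X_{r-1}$. Because $\omega$ is not incremented on any iteration in $(\sigma(t),t)$ we have $\omega_{r-1}=\omega_{t-1}$, while event $E$ forces $\omega_t=\omega_{t-1}+1$.

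In the additive sub-case $a^*\in A^*\setminus X_{r-1}$, \mutate returns $X_{r-1}\cup\{a^*\}$ at iteration $r\leq t$, and $|X_{r-1}\cup\{a^*\}|\leq\omega_{r-1}+1=\omega_t$, so Lemma \ref{lemma:replace} yields $f(X_t)=\maxfb{\omega_t}\geq f(X_{r-1}\cup\{a^*\})$. In the removal sub-case $a^*\in A^*\cap X_{r-1}$, the set $X_{r-1}\cup\{a^*\}$ coincides with $X_{r-1}$ itself, which lies in $\pool$ at the end of iteration $r-1$ and so was produced by \mutate at some earlier iteration; Lemma \ref{lemma:replace} again gives $f(X_t)\geq f(X_{r-1})=f(X_{r-1}\cup\{a^*\})$.

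Both sub-cases deliver $f(X_t)\geq f(X_{r-1}\cup\{a^*\})$ pointwise on $F$, and taking expectation conditional on $F$ completes the proof. The only bookkeeping subtlety I anticipate is pinning $r$ down as a genuine random variable on $F$ — for definiteness, letting it be the smallest element of $\{i_1,\ldots,i_H\}$ witnessing a single-element $A^*$ flip — so that $X_{r-1}$ and $a^*$ on the right-hand side are well-defined measurable objects; the underlying combinatorial content is a direct transplant of Claim \ref{claim:atleast}, and the real work in the surrounding Lemma \ref{lemma:expectationbea} is done by the lower bound on $\mathbb{P}(F\mid E)$ in Claim \ref{claim:probF} rather than by this step.
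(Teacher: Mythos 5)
Your proposal is correct and follows essentially the same route as the paper's own proof: the same case split on whether $a^*$ is added to or removed from $X_{r-1}$ (the paper's events $F_1$ and $F_2$), the same cardinality bound $|X_{r-1}\cup\{a^*\}|\leq\omega_{r-1}+1=\omega_{t-1}+1=\omega_t$, and the same two invocations of Lemma \ref{lemma:replace} to get the pointwise inequality on every run in $F$ before taking conditional expectations. Your closing remark about pinning down $r$ as a measurable choice is a reasonable bookkeeping point that the paper leaves implicit, but it does not change the argument.
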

  \begin{proof}
    If $F$ occurs, then \mutate returns either $X_{r-1}\cup\{a^*\}$ for $a^*\in A^*\setminus X_{r-1}$
    or $X_{r-1}\setminus\{a^*\}$ for $a^*\in A^*\cap X_{r-1}$. Let $F_1$ be the former
    event and let $F_2$ be the latter event.

    Consider a particular run of \bea where $F$ occurs.
    Then $|X_{r-1}\cup\{a^*\}|\leq\omega_{r-1}+1=\omega_{t-1}+1=\omega_t$ since
    $\omega$ did not increment on runs in $(\sigma(t),t)$ but did on iteration $t$.
    If $F_1$ occurs on this run, then $X_{r-1}\cup\{a^*\}$ was returned by \mutate
    on iteration $r$
    and therefore $f(X_t)\geq f(X_{r-1}\cup\{a^*\})$ by Lemma \ref{lemma:replace}.
    If $F_2$ occurs on this run, then $X_{r-1}\cup\{a^*\}=X_{r-1}$. $X_{r-1}$ was
    returned by \mutate on some run before $t$ since $X_{r-1}\in\pool$ at the end
    of iteration $r-1$. Therefore $f(X_t)\geq f(X_{r-1}\cup\{a^*\})$ by Lemma
    \ref{lemma:replace}.
    Then for any run of \ea where $F$ occurs $f(X_t)\geq f(X_{r-1}\cup\{a^*\})$
    and so Claim \ref{claim:atleast} follows.
  \end{proof}
  \begin{claim}
    \label{claim:gainbea}
    $\ex{f(X_{r-1}\cup\{a^*\})|F}\geq\left(1-1/|A^*|\right)\ex{f(X_{r-1})|F}+1/|A^*|f(A^*)$
  \end{claim}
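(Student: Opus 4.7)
The plan is to mirror the argument used in step (b) of Equation \ref{eqn:xssda3} within the proof of Lemma \ref{lemma:expectation}, essentially reducing Claim \ref{claim:gainbea} to an application of Lemma \ref{lemma:gain}. The iteration $r \in \{i_1, \ldots, i_H\}$ on which \mutate flips a single element of $A^*$ is itself random, so I would first fix it together with $X_{r-1}$ and then marginalize.

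Concretely, for each admissible iteration index $r$ and each $\alpha \in \mathbb{R}_{\geq 0}$, let $F_{r,\alpha}$ denote the sub-event of $F$ on which the ``good'' mutation occurs at iteration $r$ and $f(X_{r-1}) = \alpha$. Because \mutate flips each element of $U$ independently with probability $1/n$, the probability of flipping only $a \in A^*$ during iteration $r$ equals $(1/n)(1-1/n)^{n-1}$ for every $a \in A^*$, and does not depend on $X_{r-1}$ or on the history prior to $r$. Hence, conditional on $F_{r,\alpha}$, the identity of the flipped element $a^*$ is uniform on $A^*$, and
\begin{align*}
\ex{f(X_{r-1} \cup \{a^*\}) \mid F_{r,\alpha}}
= \alpha + \frac{1}{|A^*|} \sum_{a \in A^*} \Delta f(X_{r-1}, a).
\end{align*}
By monotonicity and submodularity of $f$ (which is precisely the content of Lemma \ref{lemma:gain} applied with $B = X_{r-1}$ and $X = A^*$), $\sum_{a \in A^*} \Delta f(X_{r-1}, a) \geq f(A^*) - \alpha$, so
\begin{align*}
\ex{f(X_{r-1} \cup \{a^*\}) \mid F_{r,\alpha}}
\geq \left(1 - \frac{1}{|A^*|}\right)\alpha + \frac{1}{|A^*|} f(A^*).
\end{align*}
Marginalizing over $r$ and $\alpha$ by the law of total expectation then yields exactly the inequality stated in Claim \ref{claim:gainbea}.

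The main obstacle I anticipate is verifying the independence-style claims needed for the marginalization: namely, that $a^*$ remains uniform on $A^*$ given $F$ after marginalizing over $r$, and that its distribution is independent of $f(X_{r-1})$ conditional on $F$. Both facts follow by symmetry in the spirit of Claim \ref{claim:independenceF}, since \mutate's internal randomness is independent across iterations and across elements of $U$; however, because $r$ is defined as the specific iteration in $\{i_1, \ldots, i_H\}$ on which the ``good'' mutation occurs, some bookkeeping is required to show that the selection of $r$ does not bias the identity of the flipped element. I would handle this by conditioning explicitly on $r$ being any fixed admissible value (as above) and noting that for each such $r$ the internal randomness of \mutate at iteration $r$ is independent of all prior iterations and hence of both $X_{r-1}$ and the other $i_j$'s.
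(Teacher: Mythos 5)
Your proposal is correct and follows essentially the same route as the paper's proof: both rest on the observation that, conditional on $F$, the flipped element is uniformly distributed over $A^*$, followed by the submodularity bound $\sum_{a\in A^*}\Delta f(X_{r-1},a)\geq f(A^*)-f(X_{r-1})$ (Lemma \ref{lemma:gain}). The paper asserts the uniformity in a single line without the explicit conditioning on the iteration index $r$ and on the value of $f(X_{r-1})$ that you carry out, so your version is just a more carefully bookkept form of the same argument.
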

  \begin{proof}
    Consider all runs of \bea where $F$ occurs. Then it can be seen by inspecting
    \mutate that any element of $A^*$ is equally likely to be the one flipped.
    Therefore
    \begin{align*}
       \ex{\delt{X_{r-1}}{a^*}|F} &= \frac{1}{|A^*|}\sum_{y\in A^*}\ex{\delt{X_{r-1}}{y}} \\
       &\overset{a}{\geq} \frac{1}{|A^*|}(f(A^*)-\ex{f(X_{r-1})})
    \end{align*}
    where (a) follows from the monotonicity and submodularity of $f$.
    Claim \ref{claim:gainbea} follows by re-arranging the equation.
  \end{proof}

  \noindent Now we can apply the above claims in order to see that
  \begin{align}
    \ex{f(X_t)|F}
    &\overset{a}{\geq} \left(1-\frac{1}{|A^*|}\right)\ex{f(X_{r-1})|F}+\frac{1}{|A^*|}f(A^*) \nonumber \\
    &\overset{b}{\geq} \left(1-\frac{1}{|A^*|}\right)\ex{f(X_{\sigma(t)})|F} + \frac{1}{|A^*|}f(A^*) \nonumber \\
    &\overset{c}{=} \left(1-\frac{1}{|A^*|}\right)\ex{f(X_{\sigma(t)})} + \frac{1}{|A^*|}f(A^*) \label{eqn:3}
  \end{align}
  where (a) is applying Claims \ref{claim:existY} and \ref{claim:gainbea};
  (b) is applying Claim \ref{claim:improve};
  and (c) is applying Claim \ref{claim:independenceF}.
  Finally
  \begin{align}
    f(A^*) - \ex{f(X_t)|E} &= f(A^*) - \left(P(F|E)\ex{f(X_t)|F} + P(\neg F|E)\ex{f(X_t)|\neg F}\right) \nonumber \\
    &\overset{a}{\leq} f(A^*) - \left(P(F|E)\ex{f(X_t)|F} + P(\neg F|E)\ex{f(X_{\sigma(t)})}\right) \nonumber \\
    &\overset{b}{\leq} \left(1-\frac{P(F|E)}{|A^*|}\right)\left(f(A^*)-\ex{f(X_{\sigma(t)})}\right) \nonumber \\
    &\overset{c}{\leq} \left(1-\frac{1-\epsilon}{|A^*|}\right)\left(f(A^*)-\ex{f(X_{\sigma(t)})}\right) \nonumber \\
    &\overset{d}{\leq} \left(1-\frac{1-\epsilon}{|A^*|}\right)^{\omega_{\sigma(t)}+1}f(A^*) \nonumber \\
    &\overset{d}{=} \left(1-\frac{1-\epsilon}{|A^*|}\right)^{\omega_t}f(A^*) \label{eqn:4}
  \end{align}
  where (a) is because $\ex{f(X_t)|\neg F}\geq \ex{f(X_{\sigma(t)}|\neg F)}$ by Claim \ref{claim:improve}
  and $\ex{f(X_{\sigma(t)})|\neg F}=\ex{f(X_{\sigma(t)})}$ by Claim \ref{claim:independenceF};
  (b) is using Equation \ref{eqn:3}; (c) is using Claim \ref{claim:probF}; (d) is using the
  inductive assumption; and (e) is because if $E$ occurred then $\omega_{\sigma(t)}+1=\omega_t$.
  Finally, the inductive step follows by Equations \ref{eqn:1}, \ref{eqn:4}
  and the law of total probability. Therefore Lemma \ref{lemma:expectationbea} is proven.
\end{proof}

\begin{paragraph}{Lemma \ref{lemma:bernoullibea}}
  \it{
  Consider a run of \bea as a series of Bernoulli trials $Y_1,...,Y_T$,
  where each iteration is a
  trial and a success is defined to be when $\ell$ is incremented.
  $Y_1,...,Y_T$ are independent, identically distributed Bernoulli trials where the
  probability of success is \beaprob.
  }
\end{paragraph}
\begin{proof}
  In order for $\ell$ to be incremented, the element in
  $\mathcal{S}$ of cardinality $\beta$ must be chosen by \selectbea,
  which one can see by inspecting \selectbea that this occurs with probability $p$.
\end{proof}

\begin{paragraph}{End of Proof of Theorem \ref{theorem:bea}}
    By Lemma \ref{lemma:bernoullibea}, we are able to apply Chernoff's bound
    (Lemma \ref{lemma:chernoff}) to $Y_1,...,Y_T$.
    By seeing that $\ex{\sum_{i=1}^TY_i}=T$\beaprob,
    \begin{align*}
      P\left(\sum_{i=1}^TY_i < H\kappa\right)
      &\overset{a}{\leq} P\left(\sum_{i=1}^TY_i < \frac{Tp}{2\maxpointer}\right) \\
      &\overset{b}{\leq} e^{-Tp/(8\maxpointer)} \\
      &\overset{c}{\leq} \epsilon
    \end{align*}
    where (a) is because $T\geq\beaTa$;
    (b) is applying Lemma \ref{lemma:chernoff} with $\eta=1/2$;
    and (c) is because $T\geq\beaTb$.
\end{paragraph}

\subsection{$\kappa$-\beal (\kappabea)}
\label{appendix:kappabea}
\begin{algorithm}
   \caption{\kappabea$(f, \kappa, P, T, p, \epsilon)$: $\kappa$-\beal}
   \label{algorithm:kappabea}
   \begin{algorithmic}[1]
     \STATE {\bfseries Input:} \definef; $\kappa\leq n$; \defineP; \defineT; \definep;
     \defineepsilon.
     \STATE {\bfseries Output:} \definepool.
     \STATE $\beta\gets 0$, $\ell\gets 0$, $H\gets en\ln(1/\epsilon)/\kappa$
     \STATE $\pool \gets \{\emptyset\}$
     \FOR {$t\gets 1$ to $T$}
      \STATE $i\gets\uniform(\{0,...,P-1\})$
      \IF {$\flip(p)=$ heads}
       \STATE $i\gets |\argmaxf{\beta}|$\label{line:randomkappabea}
       \STATE $\ell \gets \ell + 1$
       \IF{$\ell = H$}
        \STATE $\ell\gets 0$
        \STATE $\beta\gets\beta+1$
       \ENDIF
      \ENDIF
      \IF{$\Sino{i}$ exists}
       \STATE $B\gets \Sino{i}$
       \STATE $B' \gets $ \mutate($B$)
       \IF {$|B'|<P$ and $\nexists Y \in \pool$ such that $B' \preceq Y$}
        \STATE $\pool \gets \pool \cup \{B'\}\setminus \{Y\in\pool: Y\prec B'\}$
       \ENDIF
      \ENDIF
     \ENDFOR
     \STATE \textbf{return} $\mathcal{S}$
\end{algorithmic}
\end{algorithm}




\end{document}